\DeclareMathOperator*{\argmin}{arg\,min}
\DeclareMathOperator*{\argmax}{arg\,max}
\newcommand{\eqdef}{\mathrel{\mathop:}=}
\DeclareMathOperator{\NP}{\NP}
\DeclareMathOperator{\MC}{MC}
\DeclareMathOperator{\QST}{QST}
\DeclareMathOperator{\MWST}{MWST}
\DeclareMathOperator{\BCDS}{BCDS}
\DeclareMathOperator{\OPT}{OPT}
\DeclareMathOperator{\BEVD}{BEVD}
\DeclareMathOperator{\BEVDC}{BEVD_C}
\DeclareMathOperator{\PCDS}{PCDS}
\begin{document}
\title{Improved Budgeted Connected Domination and Budgeted Edge-Vertex Domination\thanks{This work was partially supported by the Special Account for Research Grants (ELKE) of the National and Kapodistrian University of Athens (NKUA).}}

\titlerunning{Improved Budgeted Connected Domination}
% If the paper title is too long for the running head, you can set
% an abbreviated paper title here
%
\author{Ioannis Lamprou \and
Ioannis Sigalas \and
Vassilis Zissimopoulos}
\authorrunning{I. Lamprou et al.}
% First names are abbreviated in the running head.
% If there are more than two authors, 'et al.' is used.
%
\institute{Department of Informatics and Telecommunications,\\ National and Kapodistrian University of Athens, Greece\\
\email{\{ilamprou, sigalasi, vassilis\}@di.uoa.gr}
}

\maketitle              % typeset the header of the contribution

\begin{abstract}
We consider the \emph{Budgeted} version of the classical \emph{Connected Dominating Set} problem (BCDS).
Given a graph $G$ and a budget $k$, we seek a connected subset of at most $k$ vertices maximizing the number of dominated vertices in $G$. 
We improve over the previous $(1-1/e)/13$ approximation in [Khuller, Purohit, and Sarpatwar,\ \emph{SODA 2014}] by introducing a new method for performing tree decompositions in the analysis of the last part of the algorithm.
This new approach provides a $(1-1/e)/12$ approximation guarantee.
By generalizing the analysis of the first part of the algorithm, we are able to modify it appropriately and obtain a further improvement to $(1-e^{-7/8})/11$.
On the other hand, we prove a $(1-1/e+\epsilon)$ inapproximability bound, for any $\epsilon > 0$.

We also examine the \emph{edge-vertex domination} variant, where an edge dominates its endpoints and all vertices neighboring them.
In \emph{Budgeted Edge-Vertex Domination} (BEVD), we are given a graph $G$, and a budget $k$, and we seek a, not necessarily connected, subset of $k$ edges such that the number of dominated vertices in $G$ is maximized.
We prove there exists a $(1-1/e)$-approximation algorithm.
Also, for any $\epsilon > 0$, we present a $(1-1/e+\epsilon)$-inapproximability result by a gap-preserving reduction from the \emph{maximum coverage} problem.
%We notice that, in the connected case, BEVD becomes equivalent to BCDS.
Finally, we examine the ``dual'' \emph{Partial Edge-Vertex Domination} (PEVD) problem, where a graph $G$ and a quota $n'$ are given.
The goal is to select a minimum-size set of edges to dominate at least $n'$ vertices in $G$.
In this case, we present a $H(n')$-approximation algorithm by a reduction to the \emph{partial cover} problem.

\keywords{Approximation \and Budget \and Partial \and Connected Domination \and Edge-Vertex Domination}
\end{abstract}
\section{Introduction} 

% classical vertex-vertex domination
The problem of vertices dominating vertices in a graph is very common and has been extensively studied in graph theory and combinatorial optimization literature.
In the classical definition, a dominating set is a subset of vertices such that each vertex is either a member of the subset or adjacent to a member of the subset.
Intuitively, a dominating set provides a skeleton for the placement of resources, such that any network node is within immediate reach to them. 

% the budgeted case
However, as it is often the case, there are constraints on the amount of resources available for placement, e.g., due to financial or other management reasons.
That is, we are limited to a budget of $k$ resources to be placed on network nodes.
The optimization goal is to place the available resources suitably, such that the number of network nodes they dominate is maximized.
This problem is known in literature as the \emph{Budgeted Dominating Set} problem.

% the budgeted connected case
Budgeted domination has applications especially in 
ad-hoc wireless (sensor) networks.
In this setting, a set of network nodes needs to be identified as the virtual backbone of the network, that is, the structure responsible for routing and packet forwarding.
To achieve these tasks, nodes in the backbone must be able to communicate with each other, i.e., form a \textit{connected} set of vertices in the graph capturing the topology of their communication ranges.
The resulting optimization problem is the \emph{Budgeted Connected Dominating Set} (BCDS) problem.
In this paper, we study BCDS and present an improved guarantee over the previous state of the art \cite{Khuller}. 

% moving toward edge dominators
Besides BCDS, we examine other problems where graph edges are selected as dominators. 
The concept of edges dominating adjacent edges has been well-considered in literature; e.g., see \cite{Horton,Yannakakis} for some preliminary results.
An example application is in network tomography where probes need to be placed to monitor the health of network links \cite{Kumar}.

% edge-vertex motivations
In this paper, we consider cases where resources must be positioned on the links of a network to dominate network nodes.
For instance, consider a power system where a limited number of static var compensators need to be placed on transmission lines' midpoints to locate faults affecting a big proportion of buses \cite{Khoa}.
Another example is to identify a limited-size set of friendships, modeled as graph edges, having a big impact in terms of neighborhood in a social network.

% edge-vertex domination and problems studied
More formally, the notion in consideration is \textit{edge-vertex domination}, where an edge dominates its endpoints and any vertices adjacent to its endpoints.
We examine the (in)approximability of \emph{Budgeted Edge-Vertex Domination} (BEVD), where we seek a, not necessarily connected, set of $k$ (budget) edges dominating as many vertices as possible.
If the edge set is required to be connected, we show that the problem essentially matches BCDS.
Finally, we consider the related \emph{Partial Edge-Vertex Domination} (PEVD) problem: a quota of vertices needs to be dominated by utilizing the minimum number of edges possible.

\subsection{Related Work}
% connected domination
Finding a minimum-size connected set of vertices dominating the whole graph is a classical NP-hard problem.
In \cite{Guha}, Guha and Khuller proposed a $\ln\Delta + 3$ approximation algorithm, which is (up to constant factors) the best possible, since the problem is hard to approximate within a factor of $(1-\epsilon)\log n$ \cite{Feige}.
For a bigger picture of the research landscape, in \cite{Du}, many connected domination results for special graph classes and other applications are surveyed.

% budgeted and partial in general -> spanning, forest, etc., expansion paper
In \cite{Miyano}, vertex-vertex and edge-edge budgeted domination are considered.
For vertex-vertex, matching upper and lower bounds of $(1-1/e)$ are given, whereas, for edge-edge, a $(1-1/e)$ approximation and a $1303/1304 + \epsilon$ hardness are proved.

% connected budgeted and partial
In the connected case, budgeted and partial versions of domination have their origins in wireless sensor networking \cite{Liu,Wang}, where a network backbone with good qualities needs to be determined, which must either be limited in resources and/or cover a big-enough proportion of the network. 
The first, and thus far state of the art, results for the budgeted and partial cases in general graphs appear in \cite{Khuller}, where a $(1-1/e)/13$-approximation, respectively an $O(\ln\Delta)$-approximation, is proved for the budgeted, respectively partial, case. 
Other works have followed in particular settings.
For example, in \cite{Liu-Wang}, a constant factor approximation algorithm for partial connected domination on a superset of unit disk graphs, namely growth-bounded graphs, is proposed.
Their result translates to a $27$-approximation guarantee on unit disk graphs. 

% edge-vertex, vertex-edge, mixed
Regarding edge-vertex domination, the graph-theoretic notion was introduced in \cite{Peters}, together with the complementary case of vertex-edge domination, where a vertex dominates all edges incident to it or to a neighbor of it.
Some complexity and algorithmic results about the minimal size of an edge-vertex, respectively vertex-edge, dominating set appear in \cite{Lewis}.
More recently, some vertex-edge domination open questions posed in \cite{Lewis} were answered in \cite{Boutrig}.
In \cite{Venkat}, an improved bound on the edge-vertex domination number of trees was proved.
Except for the vertex-edge and edge-vertex variants, a \emph{mixed} domination variant has been introduced \cite{Sampa}, where a minimal subset of both vertices and edges need to be selected so that each vertex/edge of the graph is incident/adjacent to a vertex/edge in the subset.
Recent example works in this topic study the problem in special graph classes like trees, cacti, and split graphs \cite{Lan,Zhao}.

\subsection{Our Results}

In Section~\ref{sec:prel}, we present preliminary notions and formally define the problems.

In Section~\ref{sec:bcds}, we examine the Budgeted Connected Dominating Set (BCDS) problem, see Definition~\ref{def:bcds}, where a connected subset of budget vertices needs to dominate as many vertices as possible.
By introducing a new tree decomposition technique in Subsection~\ref{sec:eligible}, we prove a $(1-1/e)/12 \simeq 0.05267$ approximation, in Theorem~\ref{thm:1/12}, which improves over the previous best known $(1-1/e)/13$ guarantee \cite{Khuller}.
(We note the same guarantee has recently been achieved independently in \cite{Khuller-journal}.)
We further improve the ratio to $(1-e^{-7/8})/11 \simeq 0.05301$ (Theorem~\ref{thm:general-c}) by generalizing the first part of the analysis in \cite{Khuller} and then modifying the proposed algorithm accordingly in Subsection~\ref{sec:ck}.
On the negative side, for any $\epsilon > 0$, we show a first $(1-1/e+\epsilon)$ inapproximability bound; see Theorem~\ref{thm:inapprBCDS}.

In Section~\ref{sec:edge-vertex}, we consider edge-vertex domination, where a, not necessarily connected, subset of edges dominates adjacent vertices. 
If the set of edges is also required to be connected, then the problems essentially reduce to the standard vertex-vertex budgeted/partial dominating set problems; see Proposition~\ref{prop:edge-vertex-connected}.
In Subsection~\ref{sec:bevd}, we prove there is a $(1-1/e)$-approximation algorithm (Theorem~\ref{thm:bevd-approx}).
This is the best possible since we prove an $(1-1/e+\epsilon)$ inapproximability lower bound, for any $\epsilon > 0$, see Theorem~\ref{thm:inappr}.
In Subsection~\ref{sec:pevd}, we consider the problem of Partial Edge-Vertex Domination.
In Theorem~\ref{thm:partial}, we prove that, in the general case, there exists an $H(n')$-approximation, where $H(\cdot)$ is the Harmonic number and $n'$ is the number of vertices requested to be dominated. 
To do so, we employ a reduction to a partial version of the classical \emph{Set Cover} problem.

Finally, in Section~\ref{sec:conclusion}, we give some concluding remarks. % and discuss how our results can further be used, e.g., in the context of some generalized problems presented in \cite{Khuller}.

\section{Preliminaries}\label{sec:prel}
A graph $G$ is denoted as a pair $(V(G), E(G))$ (or simply $(V,E)$) of the vertices and edges of $G$.
The graphs considered are simple (neither loops nor multi-edges are allowed), connected and undirected.
Besides the aforementioned, no assumptions are made on the topology of the input graphs.

Two vertices $u,v \in V$ connected by an edge, denoted $(u,v)$ or equivalently $(v,u)$, are called \emph{adjacent} or \emph{neighboring}.
The \emph{open neighborhood} of a vertex $v \in V$ is defined as $N(v) = \{u \in V: (v, u) \in E\}$, while the \emph{closed neighborhood} is defined as $N[v] = \{v\} \cup N(v)$.
For a subset of vertices $S \subseteq V(G)$, we expand the above definitions to $N(S) = \bigcup_{v \in S} N(v) \setminus S$ and $N[S] = N(S) \cup S$.

The degree of a vertex $v \in V$ is defined as $d(v) = |N(v)|$.
The minimum, resp.\ maximum, degree of $G$ is denoted by $\delta = \min_{v \in V} d(v)$, resp.\ $\Delta = \max_{v \in V} d(v)$.

Let us now consider the neighborhood of edges in terms of vertices.
Given an edge $e = (v,u) \in E$, let $I(e) = \{v, u\}$ stand for the set containing its two incident vertices.
We define the \emph{neighborhood of an edge} $e$ as $N[e] = \bigcup_{v \in I(e)} N[v]$.
For a set of edges $E' \subseteq E$, we define $V(E') = \{v \in V \;|\; \exists e \in E'\text{ such that } v\in I(e)\}$. 
Then, we define the \emph{edge-set neighborhood} as $N[E'] = N[V(E')]$.
Here, we focus on a closed neighborhood definition, since it captures the number of vertices incident or adjacent to a set of edges in the standard edge-vertex domination paradigm (Definition~8 in \cite{Lewis}; originally introduced in \cite{Peters}).
That is, we say that a set of edges $E'$ \emph{dominates} $N[E']$.
%In Figure~\ref{fig:edge-neighborhood}, we demonstrate an example for the above notion of edges neighboring vertices.

\begin{comment}
\begin{figure}[h]
	\centering
	\includegraphics[scale=1]{}
	\caption{An example of edge-vertex domination. A single edge $(b,d)$, or $(c,e)$, dominates all vertices, whereas two vertices are needed in standard vertex-vertex domination.}
	\label{fig:edge-neighborhood}
\end{figure}
\end{comment}

%$$V(E') = \{v \in V\;|\; \exists e \in E'\text{ such that } e \text{ is incident to } v\}$$
%\begin{definition}[EDGE-SET NEIGHBORHOOD]\label{def:edge-set}
%For a subset $E' \subseteq E$, we define the edge-set neighborhood of $E'$ as 
%$$N[E'] = \{v \in V \;|\; \exists e \in E', u \in N[v] \text{ such that } e \text{ is incident to } u \}$$
%\end{definition}
%For a single edge $e \in E$ we simplify the edge-set neighborhood notation from $N[\{e\}]$ to $N[e]$.	

%\subsection{Formal Problem Definitions}

Let us now proceed to formally define the problems studied in this paper.

\begin{definition}[BUDGETED CONNECTED DOMINATING SET]\label{def:bcds}
	\newline Given a graph $G = (V,E)$ and an integer $k$, select a subset $S \subseteq V$, where $|S| \le k$, such that the subgraph induced by $S$ is connected and $|N[S]|$ is maximized.
\end{definition}

\begin{definition}[BUDGETED EDGE-VERTEX DOMINATION]\label{def:budget}
	\newline Given a graph $G=(V,E)$ and an integer $k$, select a subset $E' \subseteq E$, where $|E'| \le k$, such that $|N[E']|$ is maximized.
\end{definition}

\begin{definition}[PARTIAL EDGE-VERTEX DOMINATION]\label{def:partial}
	Given a graph $G=(V,E)$ and an integer $n'$, select a subset $E' \subseteq E$ of minimum size such that it holds $|N[E']| \ge n'$.
\end{definition}

%\begin{definition}[MAXIMUM EDGE-VERTEX EXPANSION]\label{def:expansion}
%	We define the Maximum Edge-Vertex Expansion number for a graph $G = (V,E)$ as follows, where 
%	$$\emph{MEVE}(G) = \max_{E'\subseteq E} \frac{|N[E']|}{|V(E')|}$$
%\end{definition}

\section{Budgeted Connected Dominating Set}\label{sec:bcds}

In this section, we consider the Budgeted Connected Dominating Set (BCDS) problem given in Definition~\ref{def:bcds}.
We initially present a summary of key aspects of the state of the art algorithm \cite{Khuller}, which achieves a $(1-1/e)/13$ approximation factor.
We then show how the analysis can be improved to achieve a $(1-1/e)/12$ guarantee via an alternative tree decomposition scheme; see Theorem~\ref{thm:1/12}.
Then, we generalize the analysis of the greedy procedure in order to modify a call within the state of the art algorithm.
This modification allows us to increase the approximation factor even further to $(1-e^{-7/8})/11$; see Corollary~\ref{cor:7/8}.
On the other hand, we conclude this section with a $(1-1/e+\epsilon)$, for any $\epsilon > 0$, inapproximability result; see Theorem~\ref{thm:inapprBCDS}.
%Then, we introduce a new algorithm based on a different technique to ensure the connectivity of the solution, Algorithm~\ref{alg:diam}, which improves the approximation guarantee to $(1-1/e)/7$; see Theorem~\ref{thm:improved-bcds}.

\subsection{Previous Approach}
Khuller et al., see Algorithm~\ref{alg:khuller} (Algorithm 5.1 in~\cite{Khuller}), design the first constant factor approximation algorithm for BCDS with an approximation guarantee of $(1-1/e)/13$.
Their approach comprises three method calls:
(i) a call to an algorithm returning a greedy dominating set $D$ and its corresponding profit function $p$; see Algorithm~\ref{alg:greedy} (GDS),
(ii) a call to a $2$-approximation algorithm, which follows from \cite{Garg,Johnson}, for the \emph{Quota Steiner Tree} (QST) problem defined below, and
(iii) a call to a dynamic programming scheme $Best_k(\cdot)$ to determine the maximum-profit subtree of size at most $k$ within a bigger-size tree.

\begin{figure}[h]
	\scalebox{0.95}{
		\begin{algorithm}[H]
			\SetKwInOut{Input}{Input}
			\SetKwInOut{Output}{Output}
			\DontPrintSemicolon
			
			\Input{A graph $G = (V(G), E(G))$}
			\Output{A dominating set $D \subseteq V(G)$ and a profit function $p: V(G) \rightarrow \mathbb{N} \cup \{0\}$}
			
			$D \leftarrow \emptyset$\;
			$U \leftarrow V(G)$\;
			\ForEach{$\upsilon \in V(G)$}{
				$p(\upsilon) \leftarrow 0$\;
			}
			\While{$U \neq \emptyset$}{
				$w \leftarrow \argmax_{\upsilon \in V(G) \setminus D} |N_U(\upsilon)|$ \tcc*{$N_U(\upsilon) = N[\upsilon]\cap U$}
				$p(w) \leftarrow |N_U(w)|$\;
				$U \leftarrow U \setminus N_U(w)$\;
				$D \leftarrow D \cup \{w\}$\;
			}
			\textbf{return} $(D, p)$
			
			\caption{Greedy Dominating Set (GDS) \cite{Khuller}}
			\label{alg:greedy}
		\end{algorithm}
	}
\end{figure}

\begin{figure}[h]
	\scalebox{0.95}{
		\begin{algorithm}[H]
			\SetKwInOut{Input}{Input}
			\SetKwInOut{Output}{Output}
			\DontPrintSemicolon
			
			\Input{A graph $G = (V(G), E(G))$ and $k \in \mathbb{N}$}
			\Output{A tree $\tilde{T}$ on at most $k$ vertices}
			
			$(D, p) \leftarrow GDS(G)$\;
			$T \leftarrow QST(G, (1-1/e)\OPT, p)$\;
			$\tilde{T} \leftarrow  Best_k(T, p)$\;
			
			\textbf{return} $\tilde{T}$
			
			\caption{Greedy Profit Labeling Algorithm for BCDS \cite{Khuller}}
			\label{alg:khuller}
		\end{algorithm}
	}
\end{figure}

\begin{definition}[QUOTA STEINER TREE]
	Given a graph $G$, a vertex profit function $p: V(G) \rightarrow \mathbb{N} \cup \{0\}$, an edge cost function $c: E(G) \rightarrow \mathbb{N} \cup \{0\}$ and a quota $q \in \mathbb{N}$, find a subtree $T$ that minimizes $\sum_{e\in E(T)} c(e)$ subject to the condition $\sum_{v \in V(T)} p(v) \ge q$. 
\end{definition}

\begin{theorem}[Follows from results in \cite{Garg,Johnson}]\label{thm:QST}
	There is a $2$-approximation algorithm for QUOTA STEINER TREE.
\end{theorem}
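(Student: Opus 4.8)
The plan is to obtain the guarantee by combining Garg's $2$-approximation for the $k$-MST problem with the prize-collecting/quota reduction of \cite{Johnson}, rather than analyzing QST from scratch. I would proceed in two steps: first normalize the vertex profits to units, then invoke the known $k$-MST result and unwind the reduction.

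First, I would reduce QUOTA STEINER TREE with arbitrary integer profits to the unit-profit case, i.e.\ to finding a minimum-cost subtree that contains at least $q$ prescribed terminals. Given an instance $(G,p,c,q)$, build $G'$ by attaching to every vertex $v$ with $p(v)>0$ a bundle of $p(v)$ fresh pendant vertices, each joined to $v$ by an edge of cost $0$; then reset $p(v)\leftarrow 0$ and give every fresh vertex profit $1$. Any subtree of $G'$ collecting total profit $\ge q$ restricts, by deleting its pendant leaves, to a subtree of $G$ of the same cost that contains every original vertex whose bundle it touched, hence of original profit $\ge q$; conversely any feasible subtree of $G$ extends at zero extra cost to a feasible subtree of $G'$. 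So the two optima coincide and any $\rho$-approximation transfers. This transformation is polynomial precisely because the profit function that $\QST$ receives in Algorithm~\ref{alg:khuller} is produced by GDS and hence satisfies $p(v)\le |V(G)|$ and $\sum_{v} p(v)\le |V(G)|$. Passing to the shortest-path metric closure and splitting edges of nonzero cost puts the unit-profit instance into the standard $k$-MST form.

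Second, I would invoke the $2$-approximation of \cite{Garg} for $k$-MST, which builds on a Lagrangian relaxation of the prize-collecting Steiner tree problem and the Goemans--Williamson primal--dual scheme; the point of \cite{Johnson} is exactly that the quota-style objective (a lower bound on collected profit rather than an exact count) is handled within the same framework. This returns a tree of cost at most $2\,\OPT$ spanning the required number of terminals, and unwinding the reduction of the first step yields a subtree $T$ of $G$ with $\sum_{v\in V(T)}p(v)\ge q$ and $\sum_{e\in E(T)}c(e)\le 2\,\OPT$, which is the claim.

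The step I expect to be the main obstacle -- and the one I would cite rather than reprove -- is the Lagrangian ``patching'': the prize-collecting subroutine only guarantees trees whose collected profit brackets $q$ (one slightly below $q$, one at or above $q$ for nearby values of the penalty parameter), and one must argue that the larger of the two, or a suitable merge of them, already has cost within $2\,\OPT$ while staying feasible. A smaller but genuine subtlety is checking that the profit-to-unit reduction stays polynomial, which, as noted above, is guaranteed here by the boundedness of the GDS profits; if one wanted the theorem for arbitrary (binary-encoded) profits, one would instead lean directly on the Lagrangian treatment of the quota problem in \cite{Johnson} to avoid node-splitting.
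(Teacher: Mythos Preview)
The paper does not give a proof of this theorem at all; it is stated as a black box with the parenthetical ``Follows from results in \cite{Garg,Johnson}'' and then used. Your sketch is a reasonable and essentially correct unpacking of that citation --- node-splitting to reduce integer profits to the unit-profit/$k$-MST setting (polynomial here because GDS profits sum to at most $|V(G)|$), then Garg's $2$-approximation, with \cite{Johnson} supplying the quota-to-$k$-MST bridge --- so there is nothing in the paper to compare your argument against.
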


In their analysis, Khuller et al.~\cite{Khuller} demonstrate that there exists a set $D' \subseteq D$ of size $k$ which dominates at least $(1-1/e)\OPT$ vertices, where $\OPT$ is the optimal number of dominated vertices achieved with a budget of $k$.
Furthermore, $D'$ can be connected by adding at most another $2k$ Steiner vertices, so giving a total of $3k$ vertices.
Then, it suffices to call the $2$-approximation algorithm for $\QST$, see line 2 in Algorithm~\ref{alg:khuller}, with profit function $p$ (returned by algorithm GDS at line 1), all edge costs equal to $1$ and quota equal to $(1-1/e)\OPT$. The value $\OPT$ can be guessed via a binary search between $k$ and $n$.
Overall, the returned tree has size at most $6k$ vertices and dominates at least $(1-1/e)\OPT$ vertices: a $(6, 1-1/e)$ bicriteria approximation is attained (Lemma~5.2 \cite{Khuller}).

As a final step ($Best_k(\cdot)$ at line 3), a dynamic programming approach is used to identify the best-profit subtree with at most $k$ vertices, such that the budget requirement is satisfied; see paragraph 5.2.2 in \cite{Khuller} for the relevant recurrences.
To obtain a true approximation guarantee for the final solution, the following tree decomposition lemma is used recursively to prove that, for a sufficiently large value of $k$, a tree of size $6k$ can be decomposed into $13$ trees; each of size at most $k$ (Lemma~5.4 \cite{Khuller}).

\begin{lemma}[Folklore%; follows by \cite{Jordan}
	]\label{lem:jordan-decomposition}
	Given any tree on $n$ vertices, we can decompose it into two trees (by replicating a single vertex) such that the smaller tree has at most $\lceil \frac{n}{2} \rceil$ vertices and the larger tree has at most $\lceil \frac{2n}{3} \rceil$ vertices.
\end{lemma}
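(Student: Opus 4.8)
The plan is to build the decomposition from a single \emph{split vertex} $v$: distribute the connected components of $T-v$ into two groups and attach a private copy of $v$ to each group. If $T-v$ has components $C_1,\dots,C_d$ and we partition their index set as $A\sqcup B$, then $T_1=\{v\}\cup\bigcup_{i\in A}C_i$ and $T_2=\{v\}\cup\bigcup_{i\in B}C_i$ are trees whose only common vertex is $v$, so $|V(T_1)|+|V(T_2)|=n+1$. Since the smaller of two positive integers summing to $n+1$ is at most $\lfloor(n+1)/2\rfloor=\lceil n/2\rceil$, the bound on the smaller tree holds for \emph{any} such split; the content of the lemma is therefore to arrange that the larger group of components spans at most $\lceil 2n/3\rceil-1$ vertices.

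I would take $v$ to be a \emph{centroid} of $T$, i.e., a vertex minimizing $\max_i|C_i|$; the standard one-line argument (if some component had more than $n/2$ vertices, sliding $v$ one step into that component would strictly decrease the maximum component size) gives $|C_i|\le\lfloor n/2\rfloor$ for all $i$. Order the components so that $|C_1|\ge\cdots\ge|C_d|$ and split into two cases. If $|C_1|\ge\lceil n/3\rceil$, put $A=\{1\}$: then $\sum_{i\in A}|C_i|=|C_1|\le\lfloor n/2\rfloor\le\lceil 2n/3\rceil-1$, while $\sum_{i\in B}|C_i|=n-1-|C_1|\le n-1-\lceil n/3\rceil\le\lceil 2n/3\rceil-1$, where the last inequality is just $\lceil n/3\rceil+\lceil 2n/3\rceil\ge n$. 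If instead $|C_1|<\lceil n/3\rceil$, so that $|C_i|\le\lceil n/3\rceil-1$ for every $i$, process the components one at a time and always add the current one to whichever of $A$ and $B$ has the smaller running total; looking at the last component placed on the eventually heavier side shows the two final totals differ by at most $\max_i|C_i|\le\lceil n/3\rceil-1$, so the heavier total is at most $\tfrac12\bigl((n-1)+(\lceil n/3\rceil-1)\bigr)$, which is at most $\lceil 2n/3\rceil-1$ (a short check by the residue of $n$ modulo $6$). Re-attaching the copy of $v$ then gives, in both cases, a larger tree on at most $\lceil 2n/3\rceil$ vertices and a smaller one on at most $\lceil n/2\rceil$ vertices.

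I expect no genuine obstacle here beyond careful bookkeeping: the single substantive idea is that a centroid cuts $T$ into pieces none of which exceeds $\lfloor n/2\rfloor$, after which greedily balancing the pieces between the two sides does the rest — and it is precisely this greedy balancing, rather than the centroid alone, that produces the factor $2/3$ and not a weaker constant. The points that do require attention are the floor/ceiling estimates in the two cases and the degenerate values $n\le 2$ (handled directly). One should also check that neither side of the partition is empty when $n\ge 3$: in the first case $\sum_{i\in B}|C_i|=n-1-|C_1|\ge\lceil n/2\rceil-1\ge 1$, and in the second case $d\ge 2$ (a single component would have $n-1\ge\lceil n/3\rceil$ vertices for $n\ge 3$), so the greedy rule necessarily places a component on each side.
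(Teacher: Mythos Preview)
Your argument is correct. The paper does not actually give a proof of this lemma: it is stated as folklore and used as a black box, so there is nothing to compare against. Your proof supplies exactly what is missing, via the standard centroid idea (every component of $T-v$ has at most $\lfloor n/2\rfloor$ vertices) followed by a balancing step to keep the larger side below $\lceil 2n/3\rceil$. The case split on whether the largest component already has $\ge\lceil n/3\rceil$ vertices is clean, and the greedy argument in the second case (difference bounded by the size of the last component placed on the heavier side) is the right way to get the $2/3$ constant. The floor/ceiling checks and the non-emptiness remarks are all fine; incidentally, residues modulo $3$ already suffice for the final inequality in Case~2, so the ``mod $6$'' is a slight overkill but harmless.
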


%Let us hereby provide some evidence on how the final step of the analysis in \cite{Khuller} can be improved such that the approximation ratio becomes $1/12(1-1/e)$.

%%%%%%%%%%%%%%%%%%%%%%%%%%%%%%%%%%%%%%%%%%%%%%%%%%%%%%%%%%%%%%%%
%	{\red\begin{center} ------------------------ Eligible Trees ------------------------ \end{center}}
%%%%%%%%%%%%%%%%%%%%%%%%%%%%%%%%%%%%%%%%%%%%%%%%%%%%%%%%%%%%%%%%

\subsection{Improvement to Previous Approach: Eligible Trees}\label{sec:eligible}
An improvement to the analysis in \cite{Khuller} can be achieved by utilising a more refined tree decomposition (than the recursive application of Lemma~\ref{lem:jordan-decomposition}) to provide the approximation guarantee at the final step. 
To do so, we consider a tree decomposition scheme based on the notion of \emph{eligible trees} as introduced in \cite{Bermond}.

\begin{definition}[\cite{Bermond}]
	Given a directed tree $T = (V_T, E_T)$, an \emph{eligible subtree} $T'$ is a subtree of $T$ rooted at some vertex $i \in V_T$ such that the forest obtained by deleting the edges with both endpoints in $T'$, and then all the remaining vertices of degree $0$, consists of a single tree.
\end{definition}

Assuming $T'$ is an eligible subtree not identical to $T$, after deleting all edges with both endpoints in $T'$, the only vertex of $T'$ with degree strictly greater than $0$ is the root vertex of $T'$.
That is, like in Lemma~\ref{lem:jordan-decomposition}, a single vertex is replicated when removing $T'$ from $T$; see Figure~\ref{fig:eligible}.
The following lemma suggests that, for any tree, there exists an eligible subtree within some specific size range.

%	In other words, we may remove all the vertices of an eligible subtree, \emph{except for its root vertex}, and their incident edges, and what remains is still a tree.
%	The eligible subtree and the remaining tree only share a single vertex, that is, the root of the eligible subtree.

\begin{lemma}[Lemma~5~\cite{Bermond}]\label{lem:eligible}
	For each directed tree $T = (V_T, E_T)$, and for each $p \in [1,|V_T|]\cap\mathbb{N}$, there exists an eligible subtree $T'$ of $T$ such that $p/2 \le |V_{T'}|  \le p$.
\end{lemma}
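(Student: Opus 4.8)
The plan is to root $T$ at an arbitrary vertex $r$ and to exhibit the required subtree by a single downward walk from $r$, finishing, where necessary, with a greedy accumulation of children's subtrees at the vertex where the walk halts. The first step is to record a structural description of the eligible subtrees we will use, which also makes precise the ``single replicated vertex'' picture of Figure~\ref{fig:eligible}. Writing $T_v$ for the full subtree of $T$ hanging below a vertex $v$, I claim: if $v\in V_T$, if $S$ is a set of children of $v$, and if the subtree $T'\eqdef\{v\}\cup\bigcup_{c\in S}T_c$ is not all of $T$, then $T'$ is eligible and $|V_{T'}|=1+\sum_{c\in S}|V_{T_c}|$. Indeed, deleting the edges with both endpoints in $T'$ removes exactly the edges $vc$ with $c\in S$ together with all edges inside each $T_c$ ($c\in S$); this isolates every vertex of $T'$ except $v$, while $v$ retains at least one edge --- its parent edge, or an edge to a child outside $S$ --- precisely because $T'\neq T$. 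After the degree-$0$ cleanup we are left with $T$ with the pendant subtrees $T_c$ ($c\in S$) pruned, which is a single tree. In particular, taking $S=\emptyset$, every one-vertex subtree $\{v\}$ is eligible when $|V_T|\ge 2$; this settles the case $p=1$ (and $|V_T|\ge 2$ may clearly be assumed), so from now on assume $p\ge 2$.

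For $p\ge 2$ I would run the walk maintaining, at the current vertex $v$, the invariant $|V_{T_v}|\ge p$; this holds at the start, $v=r$, since $p\le|V_T|$. At $v$ with children $c_1,\dots,c_m$ there are three cases. If some $|V_{T_{c_j}}|$ lies in $[p/2,p]$, output $T'=T_{c_j}$: it is eligible by the structural claim (with $v:=c_j$, $S:=$ children of $c_j$, noting $T_{c_j}\neq T$ as $c_j\neq r$), and we are done. If instead some child has $|V_{T_{c_j}}|>p$, move the walk to that child: the invariant is preserved and the subtree size strictly decreases, so only finitely many such steps occur. In the remaining case every child subtree has fewer than $p/2$ vertices; here I append the subtrees $T_{c_1},T_{c_2},\dots$ to $\{v\}$ one at a time, stopping the moment the running total $1+\sum_{c\in S}|V_{T_c}|$ first becomes at least $p/2$. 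Since the full total equals $|V_{T_v}|\ge p>p/2$, the threshold is crossed; since the last subtree added contributes fewer than $p/2$ vertices, the total at the stopping point is strictly below $p$. So the output $T'=\{v\}\cup\bigcup_{c\in S}T_c$ has $p/2\le|V_{T'}|<p$ --- provided we know it is eligible.

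Eligibility of the output in the last case is the single point requiring care, and it is the main (if mild) obstacle: by the structural claim we only need $T'\neq T$, i.e.\ that $S$ does not contain every child of $v$. This is where the invariant earns its keep: if the threshold were reached only after appending the last child, then the running total at that moment would be the full $|V_{T_v}|$, which by the stopping rule would be strictly below $p/2+p/2=p$, contradicting $|V_{T_v}|\ge p$; hence at least one child of $v$ is left out of $S$, so $T'\neq T$ and $T'$ is eligible. Therefore the walk terminates, after finitely many ``move'' steps, in one of the two output cases with an eligible subtree $T'$ satisfying $p/2\le|V_{T'}|\le p$, which proves the lemma. (This is precisely the gain over iterating Lemma~\ref{lem:jordan-decomposition}: the eligible-subtree walk is guaranteed to land \emph{inside} the window $[p/2,p]$.)
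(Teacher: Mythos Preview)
Your proof is correct and follows essentially the same approach as the paper's (and Bermond et al.'s) argument: descend to a vertex $v$ whose subtree has size at least $p$ while its children's subtrees are handled case-by-case, outputting a single child's subtree when one lands in the window and otherwise greedily accumulating children's subtrees under $v$ until the window is hit. The only cosmetic differences are that you phrase the descent as an explicit walk with three cases (rather than directly locating the vertex $i$ with $p_i\ge p$ and all children below $p$), and in the ``single child works'' case you output $T_{c_j}$ itself rather than $\{i\}\cup T_{i_1}$; both choices are fine.
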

\begin{figure}[h]
	\centering
	\includegraphics[scale=0.65]{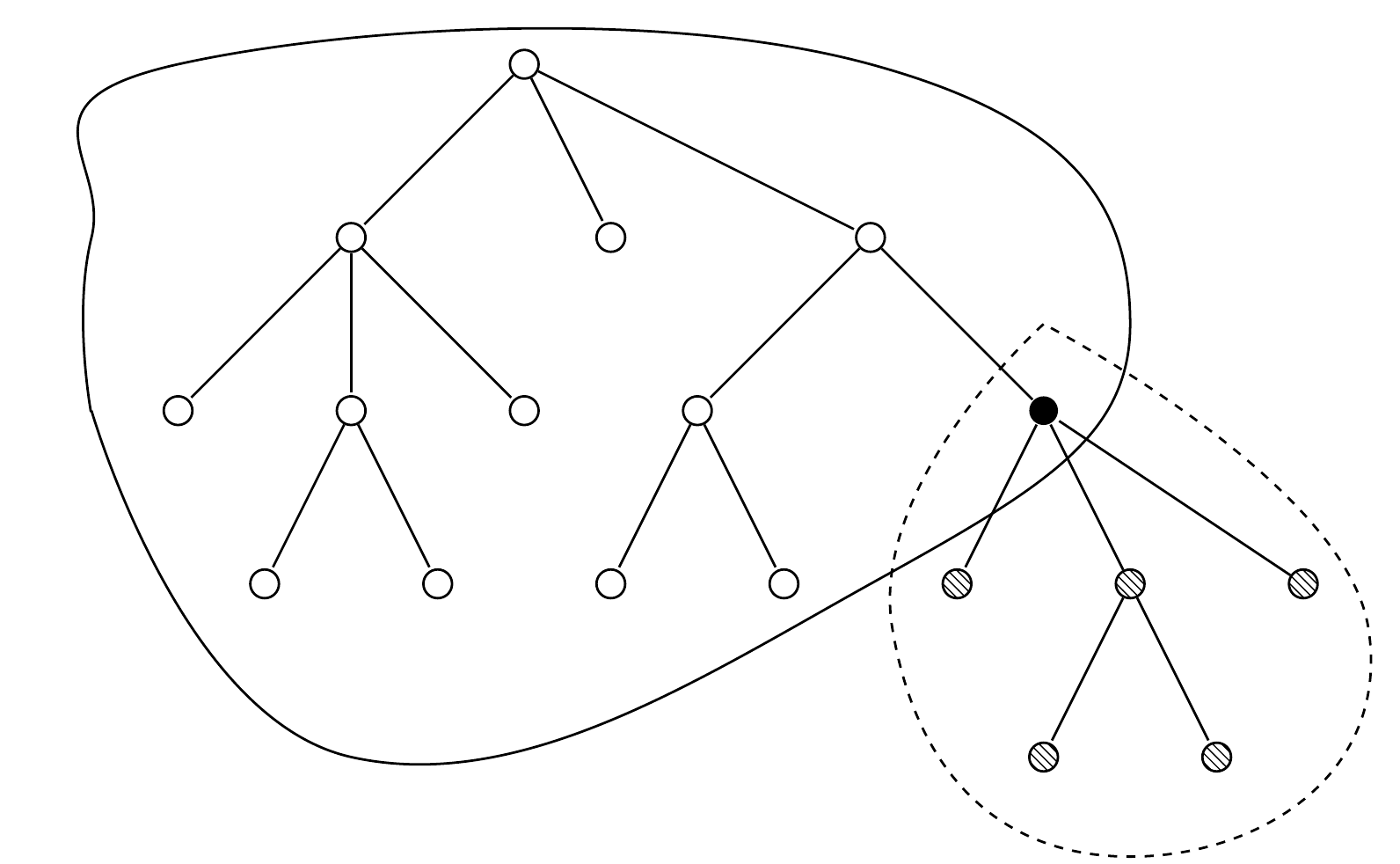}
	\caption{An example eligible subtree of size $6$ (enclosed within the dashed shape).
		After removing its edges and then all remaining vertices of degree $0$ (vertices with lines), 
		a single tree remains (enclosed within the solid shape). 
		A single vertex is replicated in both trees, the black vertex.
	}
	\label{fig:eligible}
\end{figure}
We can now proceed to employ the above lemma iteratively toward a decomposition scheme for the tree of size at most $6k$ returned by the Quota Steiner Tree call in Algorithm~\ref{alg:khuller}.

\begin{lemma}\label{lem:eligible-ak}
	Let $k$ be an integer. % greater than a sufficiently large constant.
	Given any tree $T$ on $ak$ vertices, where $a \in \mathbb{N}$ is a constant, and $k \ge 4a-2$, we can decompose it into $2a$ subtrees each on at most $k$ vertices.
\end{lemma}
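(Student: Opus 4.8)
The plan is to apply Lemma~\ref{lem:eligible} repeatedly, each time peeling off an eligible subtree of size roughly $k/2$, and to track how many vertices remain after each peel. Concretely, I would set the parameter $p = k$ in Lemma~\ref{lem:eligible} at every step (after orienting the current tree arbitrarily as a rooted, and hence directed, tree). This produces an eligible subtree $T'$ with $k/2 \le |V_{T'}| \le k$, so $T'$ is a legitimate piece of the decomposition. Removing $T'$ (deleting its internal edges, then the resulting degree-$0$ vertices) leaves a single tree $T''$, by definition of eligibility; since exactly one vertex — the root of $T'$ — is replicated, we have $|V_{T''}| = |V_T| - |V_{T'}| + 1 \le |V_T| - k/2 + 1$. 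Starting from $|V_T| = ak$, after $i$ such peels the remaining tree has at most $ak - i(k/2 - 1)$ vertices.

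The key step is to verify that after $2a - 1$ peels the leftover tree already has at most $k$ vertices, so it can serve as the final ($2a$-th) subtree. Plugging $i = 2a-1$ into the bound gives at most $ak - (2a-1)(k/2 - 1) = ak - (a k - k/2 - 2a + 1) = k/2 + 2a - 1$ vertices, and the hypothesis $k \ge 4a - 2$ yields $2a - 1 \le k/2$, hence the leftover has at most $k$ vertices as required. I would also check the boundary conditions for invoking Lemma~\ref{lem:eligible} at each of the first $2a-1$ steps: we need the current tree to still have at least $k$ vertices (so that $p = k$ is a valid choice, and so that the peeled eligible subtree genuinely has size at least $k/2 > 0$). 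Before the $i$-th peel ($1 \le i \le 2a-1$) the tree has at least $ak - (i-1)(k/2 - 1) \ge ak - (2a-2)(k/2-1)= k + 2a - 2 \ge k$ vertices using $k\ge 4a-2$ again, so each invocation is legitimate; this monotonicity argument is the part that needs the most care, since one must make sure no intermediate tree has shrunk below $k$ prematurely.

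Putting it together: we perform $2a-1$ applications of Lemma~\ref{lem:eligible}, extracting $2a-1$ subtrees each of size in $[k/2, k]$, and the final remaining tree has size at most $k$, for a total of $2a$ subtrees each on at most $k$ vertices, proving the claim. The main obstacle I anticipate is purely bookkeeping — phrasing the induction so that the "still at least $k$ vertices" invariant and the "at most $k$ vertices at the end" conclusion both follow cleanly from the single inequality $k \ge 4a - 2$, and handling the replicated-vertex $+1$ term consistently (it is what forces the $4a-2$ rather than $4a$ threshold). One should also state explicitly that orienting the tree arbitrarily is harmless, since eligibility and the lemma are stated for directed trees but the underlying decomposition is about the undirected structure.
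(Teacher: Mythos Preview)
Your approach is exactly the paper's: orient the tree, iteratively apply Lemma~\ref{lem:eligible} with $p=k$, track the upper bound $ak - i(k/2-1)$ on the remaining tree, and observe that at $i=2a-1$ this is $k/2 + 2a - 1 \le k$ when $k \ge 4a-2$.

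There is, however, a genuine slip in your boundary-condition check. You write that before the $i$-th peel the tree has \emph{at least} $ak - (i-1)(k/2-1)$ vertices, but that quantity is the \emph{upper} bound on the remaining size (obtained from the \emph{lower} bound $|V_{T'}|\ge k/2$ on each removal). The correct lower bound uses $|V_{T'}|\le k$, i.e.\ each peel removes at most $k-1$ vertices, giving at least $ak - (i-1)(k-1)$ vertices remaining --- which for $a\ge 3$ can drop below $k$ well before step $2a-1$. So you cannot guarantee that all $2a-1$ invocations with $p=k$ are legitimate.

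The fix is the one the paper uses implicitly: phrase the iteration as ``apply Lemma~\ref{lem:eligible} with $p=k$ \emph{until} the remaining tree has at most $k$ vertices.'' The stopping rule itself guarantees $p=k\le |V_T|$ at every invocation, and your (correct) upper-bound computation shows the loop terminates within $2a-1$ iterations, yielding at most $2a$ subtrees of size $\le k$. No separate lower-bound invariant is needed.
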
	
\begin{proof}
	To make $T$ directed, we orient its edges away from some arbitrary vertex picked as the root.
	Now, we iteratively apply Lemma~\ref{lem:eligible} with $p = k$, until we are left with a tree on at most $k$ vertices.
	
	First, let us show that after $i$ iterations, the remaining tree has at most $ak - i\cdot (k/2 - 1)$ vertices.
	At the first iteration, there exists an eligible subtree $T'_1$ such that $k/2 \le |V_{T'_1}| \le k$.
	After removing it from $T_1 \eqdef T$ we are left with $T_2$ of size $|V_{T_1}| - (|V_{T'_1}| - 1)$, since the root of $T'_1$ remains in $T_1$.
	Hence, $|V_{T_1}| \le ak - (k/2 - 1)$, since $k/2 \le |V_{T'_1}|$.
	Assume that after $i$ iterations of the above procedure, it holds for the remaining tree $T_{i+1}$ that $k < |V_{T_{i+1}}| \le ak - i\cdot (k/2 - 1)$.
	We inductively apply Lemma~\ref{lem:eligible} with $p = k$ and get an eligible subtree $T'_{i+1}$.
	Removing $T'_{i+1}$ from $T_{i+1}$, we get $T_{i+2}$, where $|V_{T_{i+2}}| = |V_{T_{i+1}}| - (|V_{T'_{i+1}}| - 1) \le
	ak - i\cdot (k/2 - 1) - (k/2 - 1) = ak - (i+1)\cdot (k/2 - 1)$.
	
	We proved that, after $i$ removals of eligible subtrees from the original tree, for the remaining tree $T_{i+1}$ it holds
	$|V_{T_{i+1}}| \le ak - i\cdot (k/2 - 1)$.
	For $i = 2a-1$, we get $|V_{T_{2a}}| \le ak - (2a-1)\cdot (k/2 - 1) = ak - ak + 2a + k/2 - 1 = k/2 + 2a - 1$, which is at most $k$ for a sufficiently large value of $k$, i.e., $k \ge 4a-2$.
	Overall, the original tree $T_1$ has been decomposed into $2a$ trees: $T'_1, T'_2, \ldots, T'_{2a-1}$ and $T_{2a}$, each of which has at most $k$ vertices.
\qed\end{proof}

\begin{theorem}\label{thm:1/12}
	Algorithm~\ref{alg:khuller} is a $(1-1/e)/12$ approximation for BCDS.
\end{theorem}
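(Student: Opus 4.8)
The plan is to reuse the $(6, 1-1/e)$ bicriteria guarantee that Khuller et al.\ already establish for the first two lines of Algorithm~\ref{alg:khuller}, and to replace the step where their analysis splits the size-$\le 6k$ tree into $13$ pieces (by iterating the folklore Lemma~\ref{lem:jordan-decomposition}) by a single application of Lemma~\ref{lem:eligible-ak} at $a = 6$, which produces only $12$ pieces. A pigeonhole argument over these pieces, combined with the optimality of the dynamic program $Best_k(\cdot)$ in line~3, then upgrades the bicriteria bound into a true $(1-1/e)/12$ approximation.

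First I would recall from \cite{Khuller} the two relevant properties of the tree $T$ built in line~2: it satisfies $\sum_{v \in V(T)} p(v) \ge (1-1/e)\OPT$, because every feasible $\QST$ solution meets its quota and here the quota was set to $(1-1/e)\OPT$ with the profit function $p$ returned by GDS; and $|V(T)| \le 6k$, because there is a connected dominating set on at most $3k$ vertices covering $(1-1/e)\OPT$ vertices (a size-$k$ subset of the greedy set $D$ plus at most $2k$ Steiner vertices), so the optimal $\QST$ cost is at most $3k-1$ and the $2$-approximation of Theorem~\ref{thm:QST} returns at most $6k-2$ edges. Then, assuming $k \ge 22 = 4 \cdot 6 - 2$, I would orient $T$ away from an arbitrary root and apply Lemma~\ref{lem:eligible-ak} with $a = 6$ to decompose $T$ into at most $12$ subtrees $T_1, \dots, T_{12}$ of $T$, each on at most $k$ vertices. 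Since this decomposition only replicates vertices and never deletes them, $\sum_{j=1}^{12} \sum_{v \in V(T_j)} p(v) \ge \sum_{v \in V(T)} p(v) \ge (1-1/e)\OPT$, so by pigeonhole some $T_j$ has $\sum_{v \in V(T_j)} p(v) \ge (1-1/e)\OPT / 12$. As $T_j$ is a connected subtree of $T$ on at most $k$ vertices, the call $Best_k(T,p)$ in line~3 returns a subtree $\tilde T$ with $|V(\tilde T)| \le k$ and $\sum_{v \in V(\tilde T)} p(v) \ge (1-1/e)\OPT / 12$.

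It then remains to convert this profit bound into a domination bound. I would use that in Algorithm~\ref{alg:greedy} the sets $N_U(w)$ recorded at the iteration selecting each $w \in D$ are pairwise disjoint and satisfy $N_U(w) \subseteq N[w]$, while $p$ is zero on $V \setminus D$. Writing $S = V(\tilde T) \cap D$, this gives $|N[V(\tilde T)]| \ge |N[S]| \ge \big| \bigcup_{w \in S} N_U(w) \big| = \sum_{w \in S} p(w) = \sum_{v \in V(\tilde T)} p(v) \ge (1-1/e)\OPT / 12$. Since $\tilde T$ is connected and $|V(\tilde T)| \le k$, the set $V(\tilde T)$ is a feasible BCDS solution of value at least $(1-1/e)\OPT / 12$. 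For the finitely many remaining values $k < 22$, I would simply enumerate all $O(n^{k})$ connected vertex subsets of size at most $k$ and output the one dominating the most vertices, which is exact and polynomial for constant $k$; combining the two regimes gives the theorem.

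The substantive step is Lemma~\ref{lem:eligible-ak} itself, already proven; everything else above is a re-assembly of the analysis in \cite{Khuller}, so I do not expect a genuine obstacle. The only point requiring a word of care is that $T$ may have strictly fewer than $6k$ vertices while Lemma~\ref{lem:eligible-ak} is phrased for a tree on exactly $ak$ vertices — this is harmless, since its proof halts as soon as the remaining tree falls below $k$ vertices and hence still yields at most $12$ pieces (alternatively, one may pad $T$ with zero-profit leaves attached to a single vertex, decompose, and then discard those leaves, which affects neither the profit sums nor connectivity).
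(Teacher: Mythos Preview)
Your proposal is correct and follows essentially the same approach as the paper: invoke the $(6k,1-1/e)$ bicriteria guarantee from \cite{Khuller}, apply Lemma~\ref{lem:eligible-ak} with $a=6$ to split $T$ into $12$ pieces of size at most $k$, and use pigeonhole together with the optimality of $Best_k$ to conclude. Your write-up is in fact more careful than the paper's own proof, which omits the profit-to-domination conversion, the small-$k$ regime, and the observation that $|V(T)|$ may be strictly less than $6k$.
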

%\begin{comment}
\begin{proof}
	By Lemma 5.2 in \cite{Khuller}, there is a bicriteria $(6k, 1-1/e)$ approximation for BCDS. 
	That is, line 2 in Algorithm~\ref{alg:khuller} returns a tree of size at most $6k$ which dominates at least $(1-1/e)\OPT$ vertices.
	By Lemma~\ref{lem:eligible-ak}, for $a = 6$, a tree on $6k$ vertices can be decomposed into $12$ subtrees on at most $k$ vertices.
	To obtain the best subtree on at most $k$ vertices, we run the dynamic programming procedure (subsection 5.2.2 \cite{Khuller}).
	Let $T^*$ be the returned subtree.
	Then, it holds
	%\begin{equation*}
	$ p(T^*) \ge \frac{1}{12} \sum_{i = 1}^{12} p(T_i) \ge \frac{1}{12} p(T) \ge \frac{1}{12}(1-1/e)\OPT$.
	%\end{equation*}
\qed\end{proof}
%\end{comment}

\subsection{An Improved Modified Algorithm}\label{sec:ck}

In the following proof, we generalize the analysis given in Lemma 5.1 \cite{Khuller} regarding the existence of a greedily selected set (of at most $k$ vertices) with a good intersection to the (neighborhood of the) optimal solution. Below, let $D$ and $p$ refer to the dominating set and profit function returned by GDS (line 1 in Algorithm~\ref{alg:khuller}).

\begin{lemma}\label{lem:analysis}
	There exists a set $D' \subseteq D$, $|D'|\le \lceil ck \rceil$, for some constant $0 < c \le 1$, such that $p(D') \ge (1-e^{-c})\OPT$.
	Furthermore, $D'$ can be connected using at most another $k + \lceil ck \rceil$ Steiner vertices.
\end{lemma}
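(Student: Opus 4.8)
The plan is to mimic and generalize the greedy-coverage argument of Lemma 5.1 in \cite{Khuller}, but replacing the budget $k$ by $\lceil ck\rceil$ throughout the selection phase and re-deriving the coverage loss factor. Let $O$ be an optimal BCDS solution, so $|O|\le k$ and $|N[O]|=\OPT$. Since $G[O]$ is connected, the closed neighborhoods $\{N[v]\}_{v\in O}$ form a set system over the ground set $N[O]$ in which $k$ sets already cover everything. Run GDS on $G$; it produces the order $w_1,w_2,\dots$ in which vertices enter $D$, with $p(w_i)=|N_{U}(w_i)|$ the marginal gain at that step. The key observation, exactly as in \cite{Khuller}, is that at every step of GDS the chosen vertex has marginal profit at least $\frac{1}{k}$ times the number of still-uncovered vertices of $N[O]$, because one of the $\le k$ optimal sets $N[v]$, $v\in O$, must cover at least that fraction. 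Hence, tracking only the uncovered portion of $N[O]$, after picking the first $m$ greedy vertices the number of uncovered vertices of $N[O]$ is at most $\OPT\cdot(1-1/k)^m$.

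First I would set $D'$ to be the first $m=\lceil ck\rceil$ vertices chosen by GDS, restricted to those whose recorded profit counts vertices of $N[O]$ (more precisely, take the first $\lceil ck\rceil$ greedy picks outright; their total profit $p(D')$ counts all vertices removed from $U$, which is at least the number of vertices of $N[O]$ removed). Then
\[
p(D') \;\ge\; \OPT - \OPT\Bigl(1-\tfrac1k\Bigr)^{\lceil ck\rceil} \;\ge\; \OPT\bigl(1-e^{-c}\bigr),
\]
using $(1-1/k)^{ck}\le e^{-c}$ and $\lceil ck\rceil\ge ck$. This gives the first claim, that $|D'|\le\lceil ck\rceil$ and $p(D')\ge(1-e^{-c})\OPT$.

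For the connectivity claim, the standard trick from \cite{Khuller} is to route through the optimal tree: fix a spanning tree $T_O$ of $G[O]$ on the $\le k$ vertices of $O$. Each vertex of $D'$ is within distance $\le 1$ of some vertex of $O$ (it is either in $N[O]$ or — if its profit counted no $N[O]$ vertex — we may instead have argued more carefully; I would handle this by noting that we only need to connect $D'$, and every greedily picked vertex that actually contributed to covering $N[O]$ has a neighbor in $O$). So for each $v\in D'$ add at most one edge (one Steiner vertex, the neighbor in $O$) to attach $v$ to $T_O$; this adds at most $|D'|\le\lceil ck\rceil$ Steiner vertices, and $T_O$ itself contributes at most $k$ vertices. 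Altogether $D'$ together with at most $k+\lceil ck\rceil$ additional (Steiner) vertices induces a connected subgraph, as required.

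The main obstacle I anticipate is the bookkeeping for vertices of $D'$ that are \emph{not} in $N[O]$: a greedy pick can have positive profit yet count only vertices outside $N[O]$, in which case it need not be adjacent to $O$ and cannot be cheaply attached to $T_O$. The clean fix, which I would adopt, is to define $D'$ as the first $\lceil ck\rceil$ greedy vertices \emph{that each cover at least one previously-uncovered vertex of $N[O]$}; the coverage bound above still holds for this subsequence because the greedy marginal-gain inequality is driven entirely by the residual uncovered mass of $N[O]$, and every such vertex now has a neighbor in $O$, making the Steiner-attachment argument go through with exactly $k+\lceil ck\rceil$ extra vertices. A secondary point to check is that $c\le 1$ ensures $\lceil ck\rceil\le k$ is not actually needed — we only need $\lceil ck\rceil$ to be a valid size bound — but one should state $0<c\le1$ so that the later algorithmic modification (replacing the budget in a subroutine call) keeps the total within the intended factor.
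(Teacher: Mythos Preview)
Your overall strategy is the paper's: generalize the greedy-coverage recurrence from \cite{Khuller} to $\lceil ck\rceil$ steps, then connect $D'$ through the optimal solution. The profit bound $p(D')\ge(1-e^{-c})\OPT$ goes through essentially as you wrote it.

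The connectivity argument, however, contains a genuine gap. You assert that a greedy vertex $v$ which covers some previously-uncovered $u\in N[O]$ ``has a neighbor in $O$'', hence is within distance~$1$ of $O$. This is false: if $u\in N[O]\setminus O$, then $v$ is only guaranteed to be adjacent to $u$, which places $v$ at distance up to~$2$ from $O$. Such a $v$ cannot be attached to the spanning tree $T_O$ by a single edge landing in $O$, so your accounting (``one Steiner vertex, the neighbor in $O$'') does not justify the bound.

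The paper handles exactly this via a three-layer structure $L_1=O$, $L_2=N(L_1)$, $L_3=N(L_2)\setminus L_1$: it takes $D'$ to be the first $\lceil ck\rceil$ greedy picks lying in $L_1\cup L_2\cup L_3$ (your ``fix'' lands in the same set, so you are close), then for each $v\in D'\cap L_3$ adds one Steiner vertex from $L_2$---at most $\lceil ck\rceil$ of these---and finally adds all of $L_1$, another $k$ vertices. The crucial point you are missing is that the per-vertex Steiner attachments live in $L_2$, not in $O$; the $L_2$ vertex is then adjacent to $L_1$, and $L_1$ is connected. Your final count $k+\lceil ck\rceil$ is correct, but the reasoning needs this two-hop structure rather than the one-hop picture you describe.
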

\begin{proof}
	% layering approach
	We define the layers $L_1, L_2, L_3$ as follows.
	$L_1$ contains the (at most $k$) vertices of an optimal BCDS solution.
	Let $L_2 = N(L_1)$, meaning that the optimal number of dominated vertices is $\OPT = |L_1 \cup L_2|$.
	Also, let $L_3 = N(L_2)\setminus L_1$ and $R = V \setminus (L_1 \cup L_2 \cup L_3)$, where $R$ denotes the remaining vertices, i.e., those outside the three layers $L_1, L_2, L_3$.
	Let us now consider the intersection of these layers with the greedy dominating set $D$ returned by GDS (Algorithm~\ref{alg:greedy}).
	Let $L'_i = D \cap L_i$ for $i = 1,2,3$ and $D' = \{v_1, v_2, \ldots, v_\lambda\}$ denote the first $\lambda = \lceil ck \rceil$ vertices from $L'_1 \cup L'_2 \cup L'_3$ in the order selected by the greedy algorithm. 
	In order to bound the total profit in $D'$, we define $g_i = \sum_{\mu=1}^i p(v_\mu)$ as the profit we gain from the first $i$ vertices of $D'$. For the initial value, let $g_0 = 0$.
	\begin{proposition}[Claim 1 \cite{Khuller}]\label{prop:g_i}
		For $i = 0, 1, \ldots, k-1$, it holds $g_{i+1} - g_i \ge \frac{1}{k}(\OPT - g_i)$.
	\end{proposition}	

	\begin{proof}[Proposition~\ref{prop:g_i}]	
		Consider the iteration of the greedy algorithm,	where vertex $v_{i+1}$ is being picked. 
		We first show that at most $g_i$ vertices of $L_1 \cup L_2$ have been already dominated. 
		Note that any vertex $w \in L_1 \cup L_2$ that has been already dominated must have been dominated by
		a vertex in $\{v_1, v_2,\ldots, v_i\}$. 
		This is because no vertex from $R$ can neighbor $w$. 
		Since $g_i =	\sum_{j=1}^i p(v_j)$ is the total profit gained so far, it follows that at most
		$g_i$ vertices from $L_1 \cup L_2$ have been dominated. 
		Hence, there are at least $\OPT - g_i$ undominated vertices in $L_1 \cup L_2$. 
		Since the $k$ vertices of $L_1$ together dominate all of these, it follows that there exists at least
		one vertex $v \in L_1$ which neighbors at least $\frac{1}{k}(\OPT-g_i)$	undominated vertices.
		Since the greedy algorithm chose to pick $v_{i+1}$ at this stage, instead of the $v$ above, it follows that $p(v_{i+1}) = g_{i+1} - g_i \ge \frac{1}{k}(\OPT-g_i)$.
		\qed\end{proof}	
	
	By solving the recurrence in Claim~\ref{prop:g_i},we get $g_i \ge (1-(1-\frac{1}{k})^i)\OPT$.
	Then, for the set $D'$, it holds 
%	\[
%	\resizebox{\linewidth}{!}{
		$
		\sum_{v \in D'} p(v) = g_{\lceil ck \rceil} 
		\ge \left(1 - \left(1-\frac{1}{k}\right)^{\lceil ck \rceil}\right)\OPT
		\ge \left(1 - \left(1-\frac{1}{k}\right)^{ck}\right)\OPT \ge \left(1 - \left(\left(1-\frac{1}{k}\right)^{k}\right)^c\right)\OPT \ge (1- e^{-c})\OPT
		$.
%	}
%	\]
	Moreover, let us show that an extra $k + \lceil ck \rceil$ vertices are enough to ensure that $D'$ is connected.
	We select a subset $D'' \subseteq L_2$  of size at most $|L_3 \cap D'| \le \lceil ck \rceil$ to dominate all vertices of $D' \cap L_3$.
	Then, we ensure that all vertices are connected by simply adding all the $k$ vertices of $L_1$.
	Thus, $\hat{D} = D' \cup D'' \cup L_1$ induces a connected subgraph that contains at most $k + 2\lceil ck \rceil$ vertices.
\qed\end{proof}

We can now make use of this generalized analysis and suggest a modified algorithm, parameterized by the parameter $c$, where the Quota Steiner Tree routine is called with a quota of $(1-e^{-c})\OPT$; see Algorithm~\ref{alg:modified-greedy} below.

\begin{figure}[h]
	\scalebox{0.95}{
		\begin{algorithm}[H]
			\SetKwInOut{Input}{Input}
			\SetKwInOut{Output}{Output}
			\DontPrintSemicolon
			
			\Input{A graph $G = (V(G), E(G))$, $k \in \mathbb{N}$}
			\Output{A tree $\tilde{T}$ on at most $k$ vertices}
			
			$(D, p) \leftarrow GDS(G)$\;
			$T \leftarrow QST(G, (1-e^{-c})\OPT, p)$\;
			$\tilde{T} \leftarrow  Best_k(T, p)$\;
			
			\textbf{return} $\tilde{T}$
			
			\caption{Modified Greedy Profit Labeling Algorithm for BCDS(c)}
			\label{alg:modified-greedy}
		\end{algorithm}
	}
\end{figure}

\begin{theorem}\label{thm:general-c}
	For some constant $0 < c \le 1$, there is a $(1-e^{-c})/(\lceil 8c \rceil + 4)$ approximation for BCDS.
\end{theorem}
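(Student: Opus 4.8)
The plan is to mimic the proof of Theorem~\ref{thm:1/12}, but starting from the bicriteria bound implied by the generalized Lemma~\ref{lem:analysis} instead of Lemma~5.2 in \cite{Khuller}. First I would observe that Lemma~\ref{lem:analysis} produces, for any fixed constant $0 < c \le 1$, a connected vertex set $\hat D$ of size at most $k + 2\lceil ck\rceil$ that dominates at least $(1-e^{-c})\OPT$ vertices; equivalently $\hat D$ spans a tree on at most $k + 2\lceil ck\rceil$ vertices with total profit $p(\hat D) \ge (1-e^{-c})\OPT$ under the profit function $p$ returned by GDS. Hence, feeding quota $(1-e^{-c})\OPT$ into the $2$-approximation for Quota Steiner Tree (Theorem~\ref{thm:QST}) in line~2 of Algorithm~\ref{alg:modified-greedy} yields a tree $T$ with $\sum_{v\in V(T)} p(v) \ge (1-e^{-c})\OPT$ and $|V(T)| \le 2\bigl(k + 2\lceil ck\rceil\bigr) = 2k + 4\lceil ck\rceil$. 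This is the $(2k + 4\lceil ck\rceil,\, 1-e^{-c})$ bicriteria analogue of Lemma~5.2 in \cite{Khuller}, and $\OPT$ is again guessed by binary search in $[k,n]$.

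Next I would bound $2k + 4\lceil ck\rceil$ by $(\lceil 8c\rceil + 4)\cdot\tfrac{k}{2}$, so that the tree has at most $a'k$ vertices with $a' = \lceil 8c\rceil/2 + 2$ — more precisely, I want to apply the eligible-tree decomposition of Lemma~\ref{lem:eligible-ak} with $2a = \lceil 8c\rceil + 4$, i.e., $a = \lceil 8c\rceil/2 + 2$ if that is an integer, or otherwise round the tree size up to the nearest multiple of $k$. Since $4\lceil ck\rceil \le 4(ck + 1) = 4ck + 4 \le \lceil 8c\rceil \cdot \tfrac{k}{2} + 4$ for $k$ large, we get $|V(T)| \le (\lceil 8c\rceil + 4)\cdot\tfrac{k}{2} + O(1)$, and padding the tree with dummy vertices of zero profit brings it up to exactly $ak$ vertices with $a = \lceil 8c\rceil/2 + 2$ whenever $\lceil 8c \rceil$ is even; when $\lceil 8c\rceil$ is odd one sets $a = (\lceil 8c\rceil + 5)/2$ and the decomposition gives $\lceil 8c\rceil + 5$ pieces — so I would instead take $a = \lceil(\lceil 8c\rceil+4)/2\rceil$ and decompose into $2a \le \lceil 8c\rceil + 5$ subtrees, then merge carefully. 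To keep exactly $\lceil 8c\rceil + 4$ trees I would rather bound $|V(T)| \le \lfloor(\lceil 8c\rceil+4)/2\rfloor \cdot k$ directly and invoke Lemma~\ref{lem:eligible-ak} with $a = \lfloor(\lceil 8c\rceil+4)/2\rfloor$, which for all $k$ sufficiently large ($k \ge 4a - 2$, absorbed into ``for large enough $k$'') yields the claimed $\lceil 8c\rceil + 4$ subtrees, each on at most $k$ vertices.

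Finally, I would run the dynamic program $Best_k(\cdot)$ (subsection~5.2.2 in \cite{Khuller}) on $T$ to extract the maximum-profit subtree $T^*$ on at most $k$ vertices; since $T^*$ is at least as good as the best of the $\lceil 8c\rceil + 4$ pieces, and profit is additive over the decomposition (each replicated vertex only helps),
\[
p(T^*) \ge \frac{1}{\lceil 8c\rceil + 4}\sum_i p(T_i) \ge \frac{1}{\lceil 8c\rceil + 4}\,p(T) \ge \frac{1-e^{-c}}{\lceil 8c\rceil + 4}\,\OPT,
\]
and since $\tilde T$ dominates $p(\tilde T)$ vertices, this is a $(1-e^{-c})/(\lceil 8c\rceil + 4)$ approximation. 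The main obstacle I anticipate is the bookkeeping in the second paragraph: ensuring that $2k + 4\lceil ck\rceil$ (with its ceilings and the $+O(1)$ slack from the QST rounding) fits under $\lfloor(\lceil 8c\rceil+4)/2\rfloor\cdot k$ for all sufficiently large $k$ and all admissible $c$, and that padding with zero-profit vertices does not disturb connectivity or the dynamic program — both are routine but need the parity of $\lceil 8c\rceil$ handled cleanly. The optimization over $c$ (e.g.\ $c = 7/8$ giving $\lceil 7\rceil + 4 = 11$ and ratio $(1-e^{-7/8})/11$) is then a one-variable calculus exercise left to Corollary~\ref{cor:7/8}.
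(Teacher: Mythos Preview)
Your overall architecture is right and matches the paper: Lemma~\ref{lem:analysis} plus the QST $2$-approximation gives a tree $T$ on at most $2k+4\lceil ck\rceil\le(4c+2)k+4$ vertices with profit $\ge(1-e^{-c})\OPT$, and then one decomposes $T$ into small subtrees and applies pigeonhole to the output of $Best_k$. The gap is entirely in your second paragraph, where you try to force Lemma~\ref{lem:eligible-ak} to do the decomposition. That lemma requires the tree to have exactly $ak$ vertices for an \emph{integer} $a$, and your attempts to manufacture such an $a$ all break when $\lceil 8c\rceil$ is odd --- which includes the optimizing case $c=7/8$. Concretely: your floor choice $a=\lfloor(\lceil 8c\rceil+4)/2\rfloor$ gives, for $c=7/8$, $a=5$ and hence the requirement $|V(T)|\le 5k$, but $|V(T)|$ can be as large as $11k/2+4>5k$; your ceiling choice $a=(\lceil 8c\rceil+5)/2$ yields $\lceil 8c\rceil+5$ pieces, and ``merge carefully'' cannot in general recombine two of them into one of size $\le k$ since each extracted eligible subtree already has size $\ge k/2$. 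Padding with dummy vertices to reach $6k$ only makes this worse: you are throwing away exactly the half-$k$ of slack that would have saved one subtree.

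The fix is simply to bypass Lemma~\ref{lem:eligible-ak} and run its proof directly on $T$: repeatedly apply Lemma~\ref{lem:eligible} with $p=k$ to peel off eligible subtrees of size in $[k/2,k]$. After $i$ peelings the remainder has at most $(4c+2)k+4-i(k/2-1)$ vertices; taking $i=\lceil 8c+3\rceil$ leaves at most $k/2+8c+7\le k$ vertices for $k$ large enough, so $T$ splits into exactly $\lceil 8c+3\rceil+1=\lceil 8c\rceil+4$ subtrees of size $\le k$. This is what the paper does, and it sidesteps all the parity bookkeeping because the iteration count, not the tree size, is what gets rounded. Once you make this change, your final pigeonhole paragraph goes through verbatim.
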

\begin{proof}
	By Lemma~\ref{lem:analysis} and Theorem~\ref{thm:QST}, it follows that Algorithm~\ref{alg:modified-greedy} (line 2) returns a tree of size at most $2k + 4\lceil ck \rceil \le 2k + 4(ck + 1) = (4c + 2)k + 4$ with profit at least $(1-e^{-c})\OPT$.
	For a final solution, it suffices to return a subtree of $T$, namely $T'$, of size at most $k$ which dominates the maximum number of vertices (call $Best_k(\cdot)$ in line 3 of Algorithm~\ref{alg:modified-greedy}).
	This can be done in polynomial time via dynamic programming: see section 5.2.2 in \cite{Khuller}.
	
	To prove a lower bound on the number of vertices $T'$ dominates, we decompose $T$ into a set of subtrees via iteratively removing an eligible tree from $T$.
	To do so, we apply Lemma~\ref{lem:eligible} with $p = k$.
	Like in the proof of Lemma~\ref{lem:eligible-ak}, we can prove by induction that after $i$ such removals of eligible subtrees of size at most $k$, the remaining tree has at most $|T| - i\cdot (k/2-1)$ vertices.
	For $i = \lceil 8c + 3 \rceil$, the remaining tree's size is upper bounded by $(4c + 2)k + 4 - \lceil 8c + 3 \rceil\cdot (k/2-1) \le (4c + 2)k + 4 - (8c + 3)\cdot (k/2-1) =  k/2 + 8c + 7$, which is at most $k$ for a sufficiently large choice of $k$, i.e., $k \ge 16c + 14$.
	Therefore, we can decompose $T$ into $\lceil 8c + 3 \rceil + 1 = \lceil 8c \rceil + 4$ subtrees of size at most $k$, say $T_1, T_2, \ldots, T_{\lceil 8c \rceil + 4}$.
	Then, from pigeonhole principle and our decomposition, it follows $p(T') \ge \frac{1}{\lceil 8c \rceil + 4}\sum_{i = 1}^{\lceil 8c \rceil + 4} p(T_i) \ge \frac{1}{\lceil 8c \rceil + 4}p(T) \ge \frac{1}{\lceil 8c \rceil + 4}(1-e^{-c})\OPT$.
\qed\end{proof}	

For $c = 1$, Theorem~\ref{thm:general-c} matches the approximation ratio already given in Theorem~\ref{thm:1/12}.
Since the above ratio is a function of the parameter $c$, we numerically compute its maximum value to $1/11(1-e^{-7/8})$ attained for $c = 7/8$.

\begin{corollary}\label{cor:7/8}
	There is a $1/11(1-e^{-7/8})$-approximation for BCDS.
\end{corollary}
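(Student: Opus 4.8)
The plan is to invoke Theorem~\ref{thm:general-c} and optimize over the free parameter $c$. For $k$ large enough (as the theorem requires), Theorem~\ref{thm:general-c} provides, for every constant $c \in (0,1]$, an approximation ratio of
\[
f(c) \eqdef \frac{1-e^{-c}}{\lceil 8c \rceil + 4},
\]
where the restriction $c \le 1$ is inherited from Lemma~\ref{lem:analysis}. So it suffices to show that $f$ is maximized over $(0,1]$ at $c = 7/8$ and that its value there is $\tfrac{1}{11}\bigl(1-e^{-7/8}\bigr)$.

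First I would analyze the shape of $f$. The denominator $\lceil 8c \rceil + 4$ is a non-decreasing step function of $c$: it equals $j+4$ on each half-open interval $I_j \eqdef \bigl(\tfrac{j-1}{8}, \tfrac{j}{8}\bigr]$ for $j = 1, 2, \ldots, 8$. On each such interval the numerator $1-e^{-c}$ is continuous and strictly increasing, so $f$ restricted to $I_j$ is strictly increasing and attains its supremum at the right endpoint $c = j/8$. Consequently $\sup_{c \in (0,1]} f(c) = \max_{1 \le j \le 8} f(j/8)$, and the optimization collapses to comparing eight explicit numbers $f(j/8) = \bigl(1-e^{-j/8}\bigr)/(j+4)$.

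Then I would simply tabulate these eight values. They increase with $j$ up to $j = 7$, where $f(7/8) = \bigl(1-e^{-7/8}\bigr)/11 \simeq 0.05301$, and then drop at $j = 8$, where $f(1) = (1-1/e)/12 \simeq 0.05267$; all of $f(1/8), \ldots, f(6/8)$ are strictly smaller than $f(7/8)$. Hence the maximum is attained at the admissible choice $c = 7/8 \in (0,1]$, and applying Theorem~\ref{thm:general-c} with this $c$ yields the claimed $\tfrac{1}{11}\bigl(1-e^{-7/8}\bigr)$-approximation for BCDS.

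I would not expect a genuine obstacle here: the only substantive step is the structural observation that the integer-valued (stepwise constant) denominator forces the optimum of $f$ to one of the eight breakpoints $j/8$, after which a finite numerical comparison finishes the argument. The only point to state with care is that $7/8$ lies in the permissible range $c \le 1$, so that Lemma~\ref{lem:analysis}, and hence Theorem~\ref{thm:general-c}, indeed applies.
\qed
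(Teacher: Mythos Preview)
Your proposal is correct and follows the same approach as the paper: apply Theorem~\ref{thm:general-c} and optimize the ratio $(1-e^{-c})/(\lceil 8c\rceil+4)$ over $c\in(0,1]$. The paper simply states that the maximum is ``numerically computed'' at $c=7/8$, whereas you supply the clean structural justification (stepwise-constant denominator forces the optimum to a breakpoint $j/8$, then compare eight values); this is a welcome elaboration but not a different route.
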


%The above approximation guarantee, $\frac{1}{11}(1-e^{-7/8}) \simeq 0.0530125$, improves over the previous one, $1/12(1-e^{-1}) \simeq 0.05267$, by a relative proportion of about $0.64\%$ and over the previous state of the art \cite{Khuller}, $1/13(1-e^{-1}) \simeq 0.0486$, by about $9\%$.

\subsection{Inapproximability}

In this Subsection, we demonstrate a first inapproximability result for BCDS by identifying a reduction from the well known \textit{Maximum Coverage} problem.

\begin{definition}[MAX-$k$-COVER]\label{def:max-cover}
	Given a positive integer $k$ and a collection of sets $S = \{S_1, S_2, \ldots, S_m\}$,
	find a set $S' \subseteq S$, where $|S'| \le k$, which maximizes the number of covered elements $|\bigcup_{S_i \in S'} S_i|$.
\end{definition}	

\begin{theorem}[\cite{Feige,Khuller-hardness}]\label{thm:max-cover-inapprox}
	For any $\epsilon > 0$, there is no polynomial time approximation algorithm for MAX-$k$-COVER within a ratio of $(1-1/e+\epsilon)$ unless P = NP.
\end{theorem}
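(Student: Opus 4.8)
The statement is the classical $(1-1/e)$-threshold for MAX-$k$-COVER, so the plan is to reconstruct the reduction of Feige~\cite{Feige}, with the P = NP assumption coming from later strengthenings of the underlying PCP machinery. The starting point is the NP-hardness of a gap version of LABEL COVER (a projection two-prover one-round game): for every constant $\gamma > 0$ there is a bipartite instance, with sides $U$ and $W$ and an edge-projection constraint on each edge, that is either fully satisfiable (the YES case) or at most $\gamma$-satisfiable (the NO case), and deciding which is NP-hard. Such a gap follows from the PCP theorem, with $\gamma$ pushed below any desired constant by Raz's parallel-repetition theorem.

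The combinatorial core is the partition-system gadget: a ground set of $m$ points equipped with $L$ independent partitions, each splitting the points into $h$ equal blocks. Its defining property is an asymmetry between consistent and inconsistent covers -- the $h$ blocks of any single partition tile the ground set exactly, whereas blocks taken from pairwise-distinct partitions behave like independent random sets, so that any $t$ of them leave at least a $(1-1/h)^t$ fraction of points uncovered. I would first fix $m, L, h$ and verify, by a probabilistic argument, that such a system exists and is constructible in time polynomial in its size; the relevant regime is $t \approx h$ with $h$ large, where $(1-1/h)^{h}$ is within $\epsilon$ of $1/e$.

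The coverage instance is then assembled by attaching one disjoint partition-system gadget to each edge and introducing one candidate set for each vertex-label pair; the set for a vertex $v$ and label $\ell$ grabs, in every gadget sitting on an edge incident to $v$, the blocks determined by $\ell$ through that edge's projection. The budget is the number of vertices of the LABEL COVER graph. For completeness, a satisfying assignment picks one set per vertex; on every edge the two endpoints are consistent, so the selected sets contribute exactly the blocks of one common partition, tiling that gadget, and the whole universe is covered. For soundness I would argue the contrapositive: any budgeted selection covering more than a $(1-1/e+\epsilon)$ fraction must, on a non-negligible fraction of edges, confine its contributed blocks to a single partition, and decoding those edges yields a LABEL COVER assignment satisfying more than $\gamma$ of the constraints, contradicting the NO case.

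The main obstacle is the quantitative soundness analysis: I must show that scattering contributions across the $L$ partitions forces the uncovered fraction up to essentially $1/e$, and then convert an averaged per-gadget coverage bound into a global labelling that beats the LABEL COVER gap. This hinges on choosing $h$ large (so that $(1-1/h)^h$ is $\epsilon$-close to $1/e$) and $\gamma$ small relative to $h$ and $\epsilon$, while keeping $m$, $L$, and hence the entire reduction, polynomial; balancing these parameters so that the attainable ratio tends to $1-1/e$ from above, yielding the $(1-1/e+\epsilon)$ hardness for every $\epsilon > 0$, is the delicate step.
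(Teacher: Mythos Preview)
The paper does not prove this theorem at all: it is quoted with citations to \cite{Feige,Khuller-hardness} and used as a black box in the subsequent inapproximability reductions for BCDS and BEVD. Your proposal, by contrast, sketches the actual proof---Feige's reduction from gap LABEL COVER via partition-system gadgets---and the outline you give (PCP plus parallel repetition to get an arbitrarily small soundness $\gamma$, one gadget per constraint, one candidate set per vertex-label pair, completeness by tiling each gadget with a single partition, soundness by arguing that inconsistent blocks behave like random sets and leave roughly a $1/e$ fraction uncovered) is the standard and correct high-level structure of that argument. So your approach is not ``different'' from the paper's so much as it supplies what the paper deliberately omits; for the purposes of this paper, a one-line citation is all that is expected.
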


Let us now demonstrate a \emph{gap-preserving reduction} (Definition 10.2~\cite{Arora}) which transforms an instance of MAX-$k$-COVER, namely $\MC(S,k)$, where $S = \{S_1, S_2, \ldots, S_m\}$ to an instance of BCDS, namely $\BCDS(G,k)$, where $G = (V,E)$.
For an example illustration, see Figure~\ref{fig:inapprox-bcds}.
For each set $S_i \in S$, we include a vertex $s_i$ in $V$.
Let the union of elements in the set system $\bigcup_{S_i \in S} S_i$ be represented as $\{x_1, x_2, \ldots, x_n\}$.
For each element $x_j$, we include $q$ vertices in $V$, namely $x_{j,1}, x_{j,2}, \ldots, x_{j,q}$, where $q$ is a polynomial in $m$ ($q \ge m^2$ suffices). % to be determined later (in the proof of Theorem~\ref{thm:inapprBCDS}).
Overall, $|V| = m + qn$.
In the edge set $E$, we include edges
%\begin{itemize}
	%\item
	 $(s_i, s_j)$, for each $i, j = 1, 2, \ldots, m$, $i \neq j$,
	%\item 
	and 
	$(s_i, x_{j,z})$, for each $i, j$ such that $x_j \in S_i$ and for each $z = 1, 2, \ldots, q$.
%\end{itemize}
Notice the size is polynomial in the input of $\MC(S,k)$, since we get $|E| \le \binom{m}{2} + mqn$.
In Lemma~\ref{lem:inapprBCDS}, let $\MC(S, k)$, respectively $\BCDS(G, k)$, also refer to the optimal solution for the corresponding MAX-$k$-COVER, resp. BCDS, instance.

\begin{figure}[h]
	\centering
	\includegraphics[scale=0.95]{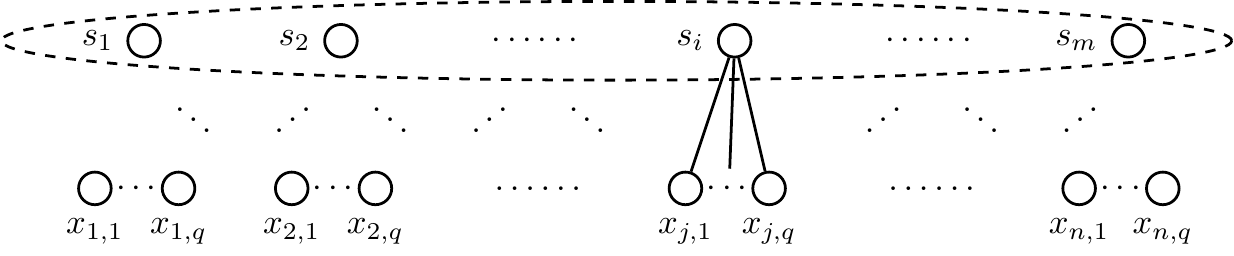}
	\caption{The graph $G$ constructed for the gap-preserving reduction employed in Lemma~\ref{lem:inapprBCDS}.
		Vertices $s_i$ within the dashed ellipse form a clique.
		Vertex $s_i$ is connected to vertices $x_{j,1}, x_{j,2}, \ldots, x_{j,q}$ in $G$ if $S_i \ni x_j$ in $\MC(S, k)$.}
	\label{fig:inapprox-bcds}
\end{figure}	

\begin{lemma}\label{lem:inapprBCDS}
	There is a gap-preserving reduction from MAX-$k$-COVER to BCDS so that,
	\begin{enumerate}
		\item[(i)] if $\MC(S,k) \ge \lambda$, then $\BCDS(G, k) \ge \Lambda$, where $\Lambda \eqdef m + q\lambda$, and
		\item[(ii)] if $\MC(S, k) < (1-\frac{1}{e}+\epsilon) \cdot \lambda$, then $\BCDS(G, k) < (1 - \frac{1}{e} + \frac{m}{e(m+q\lambda)} + \epsilon\cdot\frac{q\lambda}{m+ q\lambda})\cdot \Lambda$.
	\end{enumerate}	
\end{lemma}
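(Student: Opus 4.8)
The plan is to prove the two implications of the gap-preserving reduction separately, in each case relating an optimal BCDS solution on $G$ to an optimal cover on the original instance. The key structural observation is that in $G$ the "useful" dominators are the clique vertices $s_i$: picking a vertex $x_{j,z}$ only dominates $x_{j,z}$ itself plus the $s_i$'s adjacent to it, and is strictly worse (per vertex of budget) than picking one of those $s_i$'s, because each $s_i$ dominates itself, the whole clique, and the full blocks $\{x_{j,1},\dots,x_{j,q}\}$ for every $x_j \in S_i$. Moreover, any nonempty subset of clique vertices is automatically connected (the $s_i$'s form a clique), so the connectivity constraint is free as long as we only use $s_i$-vertices. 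This is why $q \ge m^2$ matters: it guarantees that trading an $x$-vertex for an $s$-vertex is always profitable at scale, so an optimal BCDS solution may be assumed (after a cleanup argument) to consist of $k$ clique vertices.

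\textbf{Part (i).} First I would take an optimal cover $S' = \{S_{i_1},\dots,S_{i_k}\}$ with $\left|\bigcup S' \right| \ge \lambda$ and select the corresponding vertices $S_D = \{s_{i_1},\dots,s_{i_k}\}$ in $G$. This set has size at most $k$ and is connected since it lies inside the clique. It dominates: all $m$ clique vertices $s_1,\dots,s_m$ (each $s_{i_t}$ dominates the whole clique), plus, for every element $x_j$ covered by $S'$, all $q$ copies $x_{j,1},\dots,x_{j,q}$. Hence $|N[S_D]| \ge m + q\lambda = \Lambda$, giving $\BCDS(G,k) \ge \Lambda$.

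\textbf{Part (ii).} This is the contrapositive-flavored direction and the one requiring the cleanup argument. I would assume $\BCDS(G,k) \ge \left(1-\frac{1}{e}+\frac{m}{e(m+q\lambda)}+\epsilon\cdot\frac{q\lambda}{m+q\lambda}\right)\Lambda$ and derive $\MC(S,k) \ge (1-\frac{1}{e}+\epsilon)\lambda$. Take an optimal BCDS solution $T$ with $|T|\le k$. The first step is to argue we may replace $T$ by a set $T^\star$ of exactly (at most) $k$ clique vertices without decreasing $|N[T]|$ by "too much" — in fact the natural claim is that an optimal solution can be taken to use only clique vertices: if $T$ contains some $x_{j,z}$, then $x_j \in S_i$ for some $i$ (else $x_{j,z}$ is isolated and useless), and swapping $x_{j,z}$ for $s_i$ dominates $\ge q \ge m^2$ new vertices while losing at most $m+1$; iterating and then discarding duplicates yields $T^\star \subseteq \{s_1,\dots,s_m\}$ with $|T^\star|\le k$ and $|N[T^\star]| \ge |N[T]|$. (One must be a little careful that removing $x$-vertices does not disconnect the solution; but a connected solution containing only $x$-vertices has at most $2$ of them total — an edge — which is dominated by few vertices, a case handled directly, and otherwise the clique vertices keep everything connected.) Now $T^\star = \{s_{i_1},\dots,s_{i_k}\}$ (pad with arbitrary clique vertices if $|T^\star|<k$) corresponds to a cover $S'=\{S_{i_1},\dots,S_{i_k}\}$, and $|N[T^\star]| = m + q\cdot|\bigcup S'|$. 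So $m + q\cdot\MC(S,k) \ge m + q\cdot|\bigcup S'| = |N[T^\star]| \ge \BCDS(G,k) \ge \left(1-\frac{1}{e}+\frac{m}{e\Lambda}+\epsilon\cdot\frac{q\lambda}{\Lambda}\right)(m+q\lambda)$. Expanding the right side: $(1-\frac1e)(m+q\lambda) + \frac{m}{e} + \epsilon q\lambda = m - \frac{m}{e} + q\lambda - \frac{q\lambda}{e} + \frac{m}{e} + \epsilon q\lambda = m + q\lambda(1-\frac1e+\epsilon)$. Subtracting $m$ and dividing by $q$ gives $\MC(S,k) \ge \lambda(1-\frac1e+\epsilon)$, which is exactly the contrapositive of (ii).

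\textbf{Main obstacle.} The routine inequality manipulations are trivial; the real content — and the step I would write most carefully — is the cleanup/exchange argument in Part (ii) showing that an optimal (connected!) BCDS solution may be assumed to consist solely of clique vertices. The subtlety is entirely about connectivity: one cannot blindly delete an $x$-vertex from $T$ because it might be a cut vertex of the induced subgraph. The resolution is to handle the low-budget / all-$x$-vertex configurations (where $T$ is just a single edge or a path through two $x$-copies hanging off one $s_i$) as explicit small cases, and otherwise observe that once at least one $s_i$ is present, the remaining $x$-vertices attach to clique vertices and can be swapped out or absorbed; the polynomial blow-up $q \ge m^2$ is exactly what makes each swap a strict improvement at the $\Lambda$ scale, forcing the all-clique structure at optimality.
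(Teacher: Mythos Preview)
Your approach is essentially the paper's, and the algebra in Part~(ii) is correct, but you over-complicate the exchange step and misidentify where $q \ge m^2$ is used.

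The cleanup argument is much simpler than you make it. The key observation is the containment $N[x_{j,z}] \subseteq N[s_i]$ for any $i$ with $x_j \in S_i$: indeed $N[x_{j,z}] = \{x_{j,z}\} \cup \{s_{i'} : x_j \in S_{i'}\}$, and every member of this set lies in $N[s_i]$ (the first because $x_j \in S_i$, the rest because the $s$-vertices form a clique). Hence, given any feasible $T$, define $D'$ by keeping each $s_i \in T$ and replacing each $x_{j,z} \in T$ by one $s_i$ with $x_j \in S_i$; then $|D'| \le |T|$ and $N[T] \subseteq N[D']$. No dominated vertex is lost in the swap, so your ``gains $\ge q$, loses $\le m+1$'' accounting is both unnecessary and incorrect (the gain can be $0$ if $s_i$ was already present; the loss is always $0$). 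Consequently $q \ge m^2$ plays no role in the lemma itself; it is used only afterward, in the inapproximability theorem, to drive the additive term $\frac{m}{e(m+q\lambda)}$ to $0$.

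Connectivity is likewise a non-issue: you do not modify $T$ incrementally but simply \emph{define} $D'$, which sits inside the clique and is therefore connected. The case analysis about cut vertices and all-$x$ configurations can be dropped entirely.
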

%\begin{comment}
\begin{proof}
If it holds $\MC(S,k) \ge \lambda$, then there exists a feasible solution $S' = \{S_{i_1}, S_{i_2}, \ldots, S_{i_l}\}$, $l \le k$, for which at least $\lambda$ elements are covered.
We form a feasible solution $C = \{s_{i_1}, s_{i_2}, \ldots, s_{i_l}\}$ for BCDS.
Since vertices $s_i$ form a clique, $\{s_{1}, s_{2}, \ldots, s_{m}\} \subseteq N[C]$.
Also, since at least $\lambda$ elements are covered, it holds that at least $q\lambda$ vertices $x_{j,z}$ are dominated by $C$; $q$ per each covered element.
Overall, it follows $\BCDS(G, k) \ge  \Lambda$.

For the second part of the proof, consider some feasible solution $D$ for BCDS.
We construct another feasible solution $D'$ as follows.
Initially, $D'$ is an empty set.
For each vertex of the form $s_i$ in $D$, add $s_i$ in $D'$.
For each vertex of the form $x_{j,z}$ in $D$, add some vertex $s_i$ in $D'$, if $s_i$ does not already exist, such that $x_j \in S_i$ in the max cover instance.
By construction $|D'| \le |D|$.
Moreover, since, for $x_j \in S_i$, it holds $N[x_{j,z}] \subseteq \{s_{1}, s_{2}, \ldots, s_{m}\} \subseteq N[s_i]$, it follows $N[D] \subseteq N[D']$.
So, for this part of the proof, it suffices to consider solutions for BCDS containing only vertices $s_i$.
Hence, if $\MC(S, k) < (1-\frac{1}{e}+\epsilon) \cdot \lambda$, at most $(1-\frac{1}{e}+\epsilon) \cdot q\lambda$ vertices $x_{j,z}$ can be dominated, and overall it holds 	

{\setstretch{1.25}
$$ \begin{array}{r l l}
\BCDS(G, k) & < m + (1-\frac{1}{e} + \epsilon)q\lambda &\\
&= (m + q\lambda) -\frac{1}{e} (m + q\lambda) + \frac{1}{e}m + \epsilon q\lambda & \\
&= (1 - \frac{1}{e} + \frac{m}{e(m+q\lambda)} + \epsilon\cdot\frac{q\lambda}{m + q\lambda})\cdot (m + q\lambda) & \\
& =(1 - \frac{1}{e} + \frac{m}{e(m+q\lambda)} + \epsilon\cdot\frac{q\lambda}{m + q\lambda})\cdot \Lambda & \\
\end{array}
$$
}
\qed\end{proof}
%\end{comment}

\begin{theorem}\label{thm:inapprBCDS}
	For any $\epsilon > 0$, there is no polynomial time approximation algorithm for BCDS within a ratio of $(1-1/e+\epsilon)$ unless P = NP.
\end{theorem}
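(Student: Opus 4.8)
The plan is to combine the gap-preserving reduction of Lemma~\ref{lem:inapprBCDS} with the MAX-$k$-COVER hardness of Theorem~\ref{thm:max-cover-inapprox}, arguing in contrapositive form. Suppose, for contradiction, that for some fixed constant $\epsilon_0$ with $0 < \epsilon_0 < 1/e$ there is a polynomial-time algorithm $\mathcal{A}$ approximating BCDS within ratio $1 - 1/e + \epsilon_0$ (restricting to $\epsilon_0 < 1/e$ is harmless, since a $(1-1/e+\epsilon_0)$-approximation for a smaller $\epsilon_0$ is implied by one for a larger). I would show that $\mathcal{A}$ yields a polynomial-time $(1 - 1/e + \epsilon_0/2)$-approximation for MAX-$k$-COVER, contradicting Theorem~\ref{thm:max-cover-inapprox}; since $\epsilon_0$ is arbitrary, this establishes that no $(1-1/e+\epsilon)$-approximation for BCDS exists for any $\epsilon > 0$ unless P $=$ NP.

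Given a MAX-$k$-COVER instance $(S,k)$ with $m = |S|$ and at least one nonempty set (so $\MC(S,k) \ge 1$; degenerate instances are solved directly), first build the BCDS instance $G = (V,E)$ exactly as in the reduction preceding Lemma~\ref{lem:inapprBCDS}, but choosing the replication parameter $q$ to be a polynomial in $m$ large enough that $m/q \le \epsilon_0/2$ (for instance $q = \lceil 2m/\epsilon_0 \rceil$). This construction is polynomial-time. Then run $\mathcal{A}$ on $G$ to obtain a connected set $D \subseteq V$ with $|D| \le k$ and $|N[D]| \ge (1 - 1/e + \epsilon_0)\,\BCDS(G,k)$.

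Next, convert $D$ into a cover exactly as in the proof of Lemma~\ref{lem:inapprBCDS}(ii): replace each vertex of the form $x_{j,z}$ in $D$ by some $s_i$ with $x_j \in S_i$, and keep every $s_i \in D$. Since $N[x_{j,z}] \subseteq N[s_i]$ whenever $x_j \in S_i$, this step does not increase the cardinality and does not decrease $|N[\cdot]|$, so we get a solution $\{s_i : i \in I\}$ with $|I| \le k$ and $|N[\{s_i\}_{i \in I}]| \ge (1 - 1/e + \epsilon_0)\,\BCDS(G,k)$. Because the $s_i$ form a clique, $|N[\{s_i\}_{i \in I}]| = m + q\mu$ with $\mu = |\bigcup_{i \in I} S_i|$ the number of elements covered by $\{S_i : i \in I\}$. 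Combining with Lemma~\ref{lem:inapprBCDS}(i) applied with $\lambda = \MC(S,k)$, which gives $\BCDS(G,k) \ge m + q\,\MC(S,k)$, we obtain $m + q\mu \ge (1 - 1/e + \epsilon_0)\bigl(m + q\,\MC(S,k)\bigr)$. Isolating $\mu$, dividing by $q$, and using $0 < 1/e - \epsilon_0 < 1$, $m/q \le \epsilon_0/2$, and $\MC(S,k) \ge 1$ gives $\mu \ge (1 - 1/e + \epsilon_0)\MC(S,k) - \frac{(1/e - \epsilon_0)m}{q} \ge (1 - 1/e + \epsilon_0)\MC(S,k) - \frac{\epsilon_0}{2}\MC(S,k) = (1 - 1/e + \epsilon_0/2)\MC(S,k)$. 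Hence $\{S_i : i \in I\}$ is a polynomial-time computable $(1 - 1/e + \epsilon_0/2)$-approximate MAX-$k$-COVER solution, the desired contradiction.

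The main obstacle is that the reduction of Lemma~\ref{lem:inapprBCDS} is only approximately gap-preserving: the BCDS instance inherits the extra additive slack $\frac{m}{e(m+q\lambda)} + \epsilon\cdot\frac{q\lambda}{m+q\lambda}$ in the ``NO'' case. The point that makes the argument go through is that $q$ is a free design parameter, allowed to be any polynomial in $m$, so the polluting term $\frac{m}{e(m+q\lambda)}$ (equivalently, the $m/q$ error term above) can be driven below any prescribed fraction of $\epsilon_0$; the only remaining care is the integrality of $\mu$ together with the harmless degenerate case $\MC(S,k) < 1$. One should also double-check the value-nondecreasing property of the BCDS-to-cover conversion, but this is exactly the inclusion $N[x_{j,z}] \subseteq N[s_i]$ for $x_j \in S_i$ already established in the proof of Lemma~\ref{lem:inapprBCDS}.
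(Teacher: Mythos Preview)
Your proof is correct and follows essentially the same approach as the paper: both use the reduction of Lemma~\ref{lem:inapprBCDS} together with the hardness of MAX-$k$-COVER from Theorem~\ref{thm:max-cover-inapprox}, and both exploit the freedom in choosing the replication parameter $q$ to kill the additive error term. The only difference is presentational: the paper argues directly via the gap-preserving framework (taking $q \ge m^\rho$ and letting $m \to \infty$), while you unfold the contrapositive explicitly, fix $q = \lceil 2m/\epsilon_0 \rceil$, and carry the algorithm conversion through by hand to reach a $(1-1/e+\epsilon_0/2)$-approximation for MAX-$k$-COVER.
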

%\begin{comment}
\begin{proof}
By the second part of Lemma~\ref{lem:inapprBCDS}, if $\MC(S, k) < (1-\frac{1}{e}+\epsilon) \cdot \lambda$, then it holds $\BCDS(G, k) < (1 - \frac{1}{e} + \frac{m}{e(m+q\lambda)} + \epsilon\cdot\frac{q\lambda}{m+ q\lambda})\cdot \Lambda$.
If we select $q \ge m^\rho$ for some $\rho \ge 2$, for $m \rightarrow \infty$, it holds $\frac{m}{e(m+q\lambda)} \rightarrow 0$.
Also, we observe $0 \le \epsilon\cdot\frac{q\lambda}{m+ q\lambda} \le \epsilon$.
By combination with Theorem~\ref{thm:max-cover-inapprox}, we complete the proof.
\qed\end{proof}

\section{Edge-Vertex Domination}\label{sec:edge-vertex}

We now turn our attention to edge-vertex domination problems, where the goal is to identify a set of edges which dominate vertices of the graph. We consider both budgeted and partial cover cases.

\subsection{Budgeted Edge-Vertex Domination}\label{sec:bevd}
Let us consider the general case of BEVD (Definition~\ref{def:budget}), where the selected subset of edges does not need to be connected.
We identify a strong connection to the classical MAX-$k$-COVER problem; see Definition~\ref{def:max-cover} and Theorems~\ref{thm:max-cover-inapprox}, \ref{thm:max-cover-approx}.
%MAX-$k$-COVER is $(1-1/e)$-approximable, see Theorem~\ref{thm:max-cover-approx}, which is also the best factor possible, see Theorem~\ref{thm:max-cover-inapprox}.
On the positive side, in Theorem~\ref{thm:bevd-approx}, we prove a $(1-1/e)$-approximation by reducing BEVD to an instance of MAX-$k$-COVER.
On the negative side, we demonstrate a gap-preserving reduction from MAX-$k$-COVER to BEVD and therefore conclude that the above approximation is the best possible (Theorem~\ref{thm:inappr}).

\begin{theorem}[Proposition 5.1 \cite{Feige}]\label{thm:max-cover-approx}
	There exists a $(1-1/e)$-approximation algorithm in polynomial time for MAX-$k$-COVER.
\end{theorem}

\begin{theorem}\label{thm:bevd-approx}
	There exists a $(1-1/e)$-approximation algorithm for BEVD.
\end{theorem}
%\begin{comment}
\begin{proof}
Let a graph $G = (V, E)$ and an integer $k$ be the input for BEVD.
Moreover, let $E = \{e_1, e_2, \ldots, e_m\}$.
We construct an instance $\MC(S, k)$ for MAX-$k$-COVER with input $S$ and $k$, where $S = \{S_1, S_2, \ldots, S_m\}$ and $S_i = N[e_i]$ for all $i = 1, 2, \ldots, m$.
Given a solution $S' = \{S_{i_1}, S_{i_2}, \ldots, S_{i_l}\}$ to MAX-$k$-COVER, for some $l \le k$, we transform it into a solution $E' = \{e_{i_1}, e_{i_2}, \ldots, e_{i_l}\}$ for BEVD, and vice versa.
We observe that the number of dominated vertices in $\BEVD(G,k)$ equals the number of covered elements in $\MC(S, k)$.
That is, $N[E'] = |\bigcup_{e_i \in E'} N[e_i]| = |\bigcup_{S_i \in S'} S_i|$, since by construction $N[e_i] = S_i$.
Applying Theorem~\ref{thm:max-cover-approx} completes the proof.
\qed\end{proof}
%\end{comment}

We now proceed and demonstrate a \emph{gap-preserving reduction} (Definition 10.2~\cite{Arora}) which transforms an instance of MAX-$k$-COVER, namely $\MC(S,k)$, where $S = \{S_1, S_2, \ldots, S_m\}$ to an instance of BEVD, namely $\BEVD(G,k)$, where $G = (V,E)$.
For an illustration, see Figure~\ref{fig:reduction}.
The vertex set $V$ contains a ``root'' vertex $v_0$.
For each set $S_i \in S$, we include a vertex $s_i$ in $V$.
%For each $i = 1, 2, \ldots, m$, we also let \emph{leaf-vertices} $l_{i1}, l_{i2}, \ldots, l_{im}$ in $V$.
Let the union of elements in the set system $\bigcup_{S_i \in S} S_i$ be represented as $\{x_1, x_2, \ldots, x_n\}$.
For each element $x_j$, we include $q$ vertices in $V$, namely $x_{j,1}, x_{j,2}, \ldots, x_{j,q}$, where $q$ is a polynomial in $m$ ($q \ge m^2$ suffices) %to be determined later (in the proof of Theorem~\ref{thm:inappr}).
Overall, we have $|V| = m + 1 + qn$.
In the edge set $E$, we include the edges
%\begin{itemize}
	%\item 
	$(v_0, s_i)$, for each $i = 1, 2, \ldots, m$,
	%	\item $(s'_i, s'_j)$, for each $i,j = 1, 2, \ldots, m$ ($i \neq j$),
	%	\item $(s_i, l_{ij})$, for each $i = 1, 2, \ldots, m$ and $j = 1, 2, \ldots, m$, and
	%\item 
	and 
	$(s_i, x_{j,z})$, for each $i, j$ such that $x_j \in S_i$ and for each $z = 1, 2, \ldots, q$.
%\end{itemize}
The size is polynomial in the input of $\MC(S,k)$, since we get $|E| \le m + mqn$.
In Lemma~\ref{lem:inappr}, let $\MC(S, k)$, respectively $\BEVD(G, k)$, refer to the optimal solution for the corresponding max cover, resp.\ BEVD, instance.

\begin{figure}[h]
	\centering
	\includegraphics[scale=0.95]{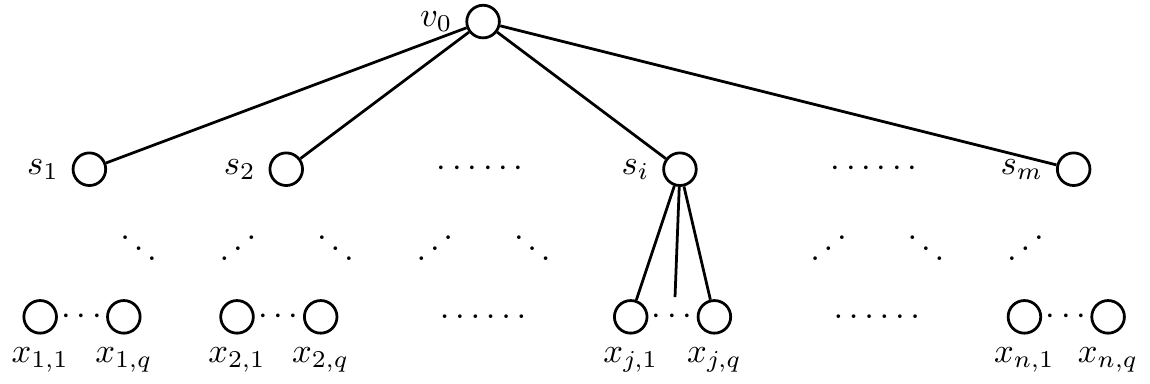}
	\caption{Graph $G$ constructed for the gap-preserving reduction employed in Lemma~\ref{lem:inappr}.
		Vertex $s_i$ is connected to vertices $x_{j,1}, x_{j,2}, \ldots, x_{j,q}$ in $G$ if $S_i \ni x_j$ in $\MC(S, k)$.}
	\label{fig:reduction}
\end{figure}	

\begin{lemma}\label{lem:inappr}
	There is a gap-preserving reduction from MAX-$k$-COVER to BEVD so that,
	\begin{itemize}
		\item[(i)] if $\MC(S, k) \ge \lambda$, then $\BEVD(G, k) \ge \Lambda$, where $\Lambda \eqdef m+1 + q\lambda$, and
		\item[(ii)] if $\MC(S, k) < (1-\frac{1}{e}+\epsilon) \cdot \lambda$, then $\BEVD(G, k) < (1 - \frac{1}{e} + \frac{m+1}{e(m+1+q\lambda)} + \epsilon\frac{q\lambda}{m+1+ q\lambda})\cdot \Lambda$.
	\end{itemize}
\end{lemma}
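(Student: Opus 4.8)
The plan is to mirror the reduction used for BCDS in Lemma~\ref{lem:inapprBCDS}, now routing every edge through the root vertex $v_0$ instead of through a clique on the $s_i$'s. Throughout, I keep in mind that an edge $e$ dominates $N[e]=N[I(e)]$, so a single spoke edge $(v_0,s_i)$ already dominates $v_0$ together with all of $s_1,\dots,s_m$ (the neighbours of $v_0$), and in addition every copy $x_{j,1},\dots,x_{j,q}$ with $x_j\in S_i$ (the neighbours of $s_i$).

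For part (i), I would start from an optimal MAX-$k$-COVER solution $S'=\{S_{i_1},\dots,S_{i_l}\}$ with $l\le k$ covering at least $\lambda$ elements, and take $E'=\{(v_0,s_{i_1}),\dots,(v_0,s_{i_l})\}$, a feasible BEVD solution since $|E'|=l\le k$. By the observation above, $N[E']$ contains $v_0$ and all $m$ vertices $s_1,\dots,s_m$, and for each of the at least $\lambda$ elements covered by $S'$ it contains the corresponding $q$ copies $x_{j,z}$. Hence $|N[E']|\ge m+1+q\lambda=\Lambda$, so $\BEVD(G,k)\ge\Lambda$.

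For part (ii), the crux is a canonicalization step: any feasible BEVD solution $E'$ can be turned, without increasing its size and without decreasing $|N[E']|$, into one that uses only spoke edges $(v_0,s_i)$. The transformation keeps each spoke edge and replaces each edge $(s_i,x_{j,z})\in E'$ by $(v_0,s_i)$. This is justified by the containment $N[(s_i,x_{j,z})]\subseteq N[(v_0,s_i)]$, which I would verify directly: $N[(v_0,s_i)]=\{v_0\}\cup\{s_1,\dots,s_m\}\cup\{x_{j',z'}: x_{j'}\in S_i\}$; this set contains $N[s_i]$; it contains every $x_{j,z}$ with $x_j\in S_i$, hence in particular the endpoint $x_{j,z}$ (since $(s_i,x_{j,z})\in E$ forces $x_j\in S_i$); and it contains all $s_{i'}$, which are the only other possible neighbours of $x_{j,z}$. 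Thus $N[E]\subseteq N[D']$ for the rewritten solution $D'$, so it suffices to bound $\BEVD(G,k)$ over solutions consisting of spoke edges only. Such a solution of size at most $k$ dominates $v_0$, all $m$ vertices $s_i$, and exactly $q$ vertices $x_{j,z}$ for each element covered by the family $\{S_i:(v_0,s_i)\in E'\}$, a family of at most $k$ sets.

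To finish, under the hypothesis $\MC(S,k)<(1-\frac{1}{e}+\epsilon)\lambda$ no family of at most $k$ sets covers $(1-\frac{1}{e}+\epsilon)\lambda$ elements, so any spoke solution dominates fewer than $(1-\frac{1}{e}+\epsilon)q\lambda$ vertices of the form $x_{j,z}$, giving $\BEVD(G,k)<m+1+(1-\frac{1}{e}+\epsilon)q\lambda$. Then I would rewrite $m+1+(1-\frac{1}{e}+\epsilon)q\lambda=(m+1+q\lambda)-\frac{1}{e}(m+1+q\lambda)+\frac{1}{e}(m+1)+\epsilon q\lambda$ and factor out $\Lambda=m+1+q\lambda$ to obtain exactly the stated bound $(1-\frac{1}{e}+\frac{m+1}{e(m+1+q\lambda)}+\epsilon\frac{q\lambda}{m+1+q\lambda})\cdot\Lambda$. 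I expect the only non-mechanical point to be the neighbourhood-containment claim underpinning canonicalization; the rest is bookkeeping, and the polynomial condition $q\ge m^2$ is not needed inside the lemma — it is only used afterwards, in Theorem~\ref{thm:inappr}, to send the additive term $\frac{m+1}{e(m+1+q\lambda)}$ to $0$ as $m\to\infty$.
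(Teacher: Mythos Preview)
Your proposal is correct and follows essentially the same route as the paper: the same spoke-edge construction for part~(i), the same canonicalization $N[(s_i,x_{j,z})]\subseteq N[(v_0,s_i)]$ for part~(ii), and the identical algebraic rewriting to extract the factor $\Lambda$. Your direct verification of the neighbourhood containment is slightly more explicit than the paper's use of the auxiliary sets $X_w,Y_i$, but the argument is the same; your remark that the condition $q\ge m^2$ is only needed later in Theorem~\ref{thm:inappr} is also accurate.
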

\begin{proof}%[Lemma~\ref{lem:inappr}]
	If $\MC(S, k) \ge \lambda$, there exists a feasible solution $S' = \{S_{i_1}, S_{i_2}, \ldots, S_{i_l}\}$, $l \le k$, for which at least $\lambda$ elements are covered.
	We then form a feasible solution $E'$ for BEVD where $E' = \{(v_0, s_{i_1}), (v_0,s_{i_2}), \ldots, (v_0, s_{i_l})\}$.
	For each edge $(v_0, s_{i_j})$ in $E'$, it holds $ \{v_0, s_1, \ldots, s_m\} \subseteq N[(v_0,s_{i_j})]$.
	Moreover, since at least $\lambda$ elements $x_j$ are covered, then at least $q \lambda$ vertices of the form $x_{j,z}$ are dominated; $q$ per each covered element.
	It follows $\BEVD(G, k) \ge m+1+ q\lambda$.
	
	%	To prove part $(ii)$, we use the following claim about the structure of a feasible solution in $G$.
	
	For part $(ii)$, consider a feasible solution for BEVD, say $E'$.
	%Assume $E'$ contains at least one edge $(v_0, s_i)$ for some $i$.
	We construct another feasible solution $E''$ as follows.
	For each edge of the form $(v_0, s_i) \in E'$, include it in $E''$.
	For each edge of the form $(s_i, x_{j,z}) \in E'$, for some $i, j, z$, include the edge $(v_0, s_i) \in E''$, if it has not been included already.
	Notice that, since at least one edge of the form $(v_0, s_i)$ is a member of $E''$, then $\{v_0, s_1, s_2, \ldots, s_m\} \subseteq N[(v_0, s_i)] \subseteq N[E'']$.
	Let us consider the differences in the number of dominated element-vertices $x_{j,z}$.
	Let $X_w = \{x_{w,1}, x_{w,2}, \ldots, x_{w,q}\}$ for $w = 1, 2, \ldots, n$
	and $Y_i = X_{i_1} \cup X_{i_2} \cup \ldots \cup X_{i_{|S_i|}}$, where $x_{i_j} \in S_i$, for $j = 1, 2, \ldots, |S_i|$, in instance $\MC(S, k)$.
	Then, it follows $Y_i \subset N[(s_i, x_{j,z})]$, since all vertices in $Y_i$ are adjacent to $s_i$.
	For the same reason, it holds $Y_i \subset N[(v_0, s_i)]$.
	Also, by construction, no other vertices $x_{j,z}$ are dominated by $(s_i, x_{j,z})$; the only ones dominated are those adjacent to $s_i$.
	The above lead us to the conclusion $N[(s_i, x_{j,z})] \subseteq N[(v_0, s_i)]$.
	Overall, inductively, we construct a feasible solution $E''$ for which $|E''| \le |E'|$ and $N[E'] \subseteq N[E'']$.
	Given the above, we need only consider one case for a feasible solution of BEVD in $G$: all selected edges are of the form $(v_0, s_i)$.
	
	We select at most $k$ edges only of the form $(v_0, s_i)$. 
	They dominate strictly fewer than $m + 1 + (1-\frac{1}{e}+\epsilon) q \lambda$ vertices.
	That is, vertices $v_0, s_1, s_2, \ldots, s_m$ are dominated and, since the selected edges are incident to at most $k$ vertices $s_i$ and by assumption $\MC(S, k) < (1-\frac{1}{e}+\epsilon) \lambda$, strictly fewer than $(1-\frac{1}{e}+\epsilon) q\lambda$ element-vertices are dominated; $q$ per each covered element.
	
	%In the second case, we select at most $k$ edges only of the form $(s_i, x_{j,z})$.
	%Vertex $v_0$ and at most $m$ vertices $s_i$ are dominated.
	%Regarding element-vertices, like above, since the selected edges are incident to at most $k$ vertices $s_i$ and by assumption $\MC(S, k) < (1-\frac{1}{e}+\epsilon) \lambda$, strictly fewer than $(1-\frac{1}{e}+\epsilon) q\lambda$ element-vertices are dominated; $q$ per each covered element.
	
	%In either case, we get the following bound for a feasible solution:
	We obtain the following bound for the optimal solution:
	{\setstretch{1.25}
		$$ \begin{array}{r l l}
		\BEVD(G, k) & < m + 1+ (1-\frac{1}{e} + \epsilon)q\lambda &\\
		&= (m+1 + q\lambda) -\frac{1}{e} (m+1 + q\lambda) + \frac{1}{e}(m+1) + \epsilon q\lambda & \\
		&= (1 - \frac{1}{e} + \frac{m+1}{e(m+1+q\lambda)} + \epsilon\cdot\frac{q\lambda}{m+1 + q\lambda})\cdot (m+1 + q\lambda) & \\
		& =(1 - \frac{1}{e} + \frac{m+1}{e(m+1+q\lambda)} + \epsilon\cdot\frac{q\lambda}{m+1 + q\lambda})\cdot \Lambda & \\
		\end{array}
		$$
	}
	\qed\end{proof}

\begin{theorem}\label{thm:inappr}
	For any $\epsilon > 0$, there is no polynomial time approximation algorithm for BEVD within a ratio of $(1-1/e+\epsilon)$ unless P = NP.
\end{theorem}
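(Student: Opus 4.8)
The plan is to derive Theorem~\ref{thm:inappr} as an immediate consequence of the gap-preserving reduction already established in Lemma~\ref{lem:inappr}, combined with the known hardness of MAX-$k$-COVER from Theorem~\ref{thm:max-cover-inapprox}. The argument mirrors the proof of Theorem~\ref{thm:inapprBCDS}: the only real work is to verify that the ``error terms'' appearing in part $(ii)$ of Lemma~\ref{lem:inappr} can be made negligible by choosing the blow-up parameter $q$ large enough, so that the multiplicative gap $(1-1/e)$ is preserved (up to an arbitrarily small additive slack) from the coverage instance to the BEVD instance.

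Concretely, first I would fix an arbitrary $\epsilon' > 0$ and recall that, by Theorem~\ref{thm:max-cover-inapprox}, it is NP-hard to distinguish instances of MAX-$k$-COVER with $\MC(S,k) \ge \lambda$ from those with $\MC(S,k) < (1 - 1/e + \epsilon')\lambda$ (here $\lambda$ is the total number of ground-set elements, realized in the ``yes'' case). Applying the reduction of Lemma~\ref{lem:inappr} with $q \ge m^{\rho}$ for some fixed $\rho \ge 2$, the ``yes'' case gives $\BEVD(G,k) \ge \Lambda = m + 1 + q\lambda$, while the ``no'' case gives
$$\BEVD(G,k) < \left(1 - \tfrac{1}{e} + \tfrac{m+1}{e(m+1+q\lambda)} + \epsilon'\cdot\tfrac{q\lambda}{m+1+q\lambda}\right)\cdot\Lambda.$$
The second step is the estimate: as $m \to \infty$ (equivalently, as the MAX-$k$-COVER instances grow), the term $\frac{m+1}{e(m+1+q\lambda)} \le \frac{m+1}{e\,q\lambda} \le \frac{m+1}{e\,m^{\rho}\lambda} \to 0$, while $0 \le \epsilon'\cdot\frac{q\lambda}{m+1+q\lambda} \le \epsilon'$. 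Hence for all sufficiently large instances the ``no''-case bound is at most $(1 - 1/e + \epsilon'')\cdot\Lambda$ for any $\epsilon'' > \epsilon'$ we like; choosing $\epsilon'$ small enough in terms of a target $\epsilon$ makes $\epsilon'' = \epsilon$.

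Putting these together: a hypothetical polynomial-time $(1 - 1/e + \epsilon)$-approximation for BEVD would, on the reduced instance, return a solution of value at least $(1 - 1/e + \epsilon)\cdot\BEVD(G,k)$, which in the ``yes'' case is at least $(1 - 1/e + \epsilon)\Lambda$, strictly larger than the ``no''-case upper bound once the error terms are driven below the appropriate threshold; this would let us distinguish the two cases of MAX-$k$-COVER in polynomial time, contradicting Theorem~\ref{thm:max-cover-inapprox} unless $\mathrm{P} = \mathrm{NP}$. I do not expect any genuine obstacle here — the reduction and its gap analysis are the substantive content and are already in hand; the only thing to be careful about is bookkeeping the relationship between the slack $\epsilon$ for BEVD and the slack $\epsilon'$ used in the MAX-$k$-COVER hardness, and confirming that the polynomial size bound $q \ge m^2$ from the construction indeed suffices to kill the $\frac{m+1}{e(m+1+q\lambda)}$ term asymptotically, which it does since $q\lambda$ dominates $m+1$.
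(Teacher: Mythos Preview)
Your proposal is correct and follows essentially the same approach as the paper: invoke Lemma~\ref{lem:inappr}, take $q \ge m^{\rho}$ with $\rho \ge 2$ so that $\frac{m+1}{e(m+1+q\lambda)} \to 0$ while $\epsilon'\cdot\frac{q\lambda}{m+1+q\lambda}\le \epsilon'$, and then appeal to Theorem~\ref{thm:max-cover-inapprox}. The paper's proof is slightly terser (it does not separate $\epsilon$ from $\epsilon'$), but the content is identical.
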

%\begin{comment}
\begin{proof}
	Notice that, in Lemma~\ref{lem:inappr}$(ii)$, if $\MC(S, k) < (1-\frac{1}{e}+\epsilon) \cdot \lambda$, it holds $\BEVD(G, k) < (1 - \frac{1}{e} + \frac{m+1}{e(m+1+q\lambda)} + \epsilon\cdot\frac{q\lambda}{m+1+ q\lambda})\cdot \Lambda$.
	For a large enough value of $q$, that is, $q \ge m^\rho$ for some $\rho \ge 2$,
	we get $\frac{m+1}{e(m+1+q\lambda)} \rightarrow 0$ as $m \rightarrow \infty$.
	Moreover, it holds $0\le \epsilon\cdot\frac{q\lambda}{m+1 + q\lambda} \le \epsilon$.
	Hence, by the combination of Theorem~\ref{thm:max-cover-inapprox} and Lemma~\ref{lem:inappr}, we complete the proof.	
\qed\end{proof}
%\end{comment}

% equivalent to BCDS in connected

As a side note, consider the case where the selected edge set is required to be connected.
That is, let $\BEVDC$ refer to the budgeted edge-vertex \emph{connected} domination problem.
Below, we prove that this problem is equivalent to the budgeted connected dominating set (BCDS) problem researched in Section~\ref{sec:bcds}.

\begin{proposition}\label{prop:edge-vertex-connected}
	For any $G = (V,E)$ where $|V| \ge 2$, and integer $k \ge 2$, a feasible solution $S$ to $\BCDS(G,k)$ can be transformed to a solution $S_E$ to $\BEVDC(G, k-1)$, where $N[S] = N[S_E]$, and vice versa.
\end{proposition}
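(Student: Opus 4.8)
The plan is to use the near-bijection between connected vertex sets and connected edge sets: a connected set of $j$ edges spans at most $j+1$ vertices, and a connected set of $j \ge 2$ vertices carries a spanning tree with exactly $j-1$ edges. The key invariant is that, by definition, $N[\cdot]$ on an edge set is computed through its incident vertices, so whenever $V(S_E) = S$ we automatically get $N[S_E] = N[V(S_E)] = N[S]$; this is exactly the equality demanded by the statement.

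For the forward direction, I would start from a feasible $S$ for $\BCDS(G,k)$, so that $G[S]$ is connected and $|S| \le k$. If $|S| \ge 2$, take any spanning tree $T$ of $G[S]$ and set $S_E \eqdef E(T)$. Then $(S, S_E) = T$ is a connected subgraph, so $S_E$ is a connected edge set; moreover $|S_E| = |S|-1 \le k-1$, and since $T$ spans $G[S]$ we have $V(S_E) = S$ and hence $N[S_E] = N[S]$, as wanted. The only delicate case is $|S| = 1$: writing $S = \{v\}$, I use the hypotheses $|V| \ge 2$ and $G$ connected to pick a neighbour $u$ of $v$ and set $S_E \eqdef \{(u,v)\}$, a connected edge set of size $1 \le k-1$ with $N[S_E] = N[u]\cup N[v] \supseteq N[v] = N[S]$; here $S_E$ dominates at least as many vertices as $S$, which suffices (the empty set, if allowed, maps to the empty set).

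For the converse, let $E'$ be feasible for $\BEVDC(G,k-1)$ and put $S \eqdef V(E')$. Since $E'$ is a connected edge set, the subgraph $(S, E')$ is connected, hence so is the supergraph $G[S]$; and a connected graph with edge set $E'$ has at least $|V(E')|-1$ edges, so $|S| \le |E'| + 1 \le k$. Thus $S$ is feasible for $\BCDS(G,k)$, and directly $N[S] = N[V(E')] = N[E']$, so the objective value is preserved exactly. Combining the two directions shows that the optimal values of $\BCDS(G,k)$ and $\BEVDC(G,k-1)$ coincide, and that an $\alpha$-approximate solution for either problem yields an $\alpha$-approximate solution for the other; in particular the algorithms and the inapproximability bound of Section~\ref{sec:bcds} transfer verbatim to $\BEVDC$.

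The only real friction in writing this out is the degenerate case $|S| \le 1$ in the forward direction: there the clean identity $V(S_E) = S$ is unavailable (a single vertex is not spanned by any edge), and one must settle for the inclusion $N[S] \subseteq N[S_E]$ rather than equality. I would handle this explicitly and note that it does not affect the conclusion, since that direction is only used to argue that no objective value is lost. Everything else follows immediately from the definitions of $N[\cdot]$, of connected edge sets, and of the two problems.
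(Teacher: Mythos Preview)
Your approach is essentially identical to the paper's: in the forward direction take a spanning tree of $G[S]$ and use its edge set as $S_E$; in the reverse direction take $S \eqdef V(E')$ and use that a connected edge set on $|E'|$ edges spans at most $|E'|+1$ vertices. The only difference is that you explicitly isolate and handle the degenerate case $|S|=1$ (where no spanning tree with edges exists and one must settle for $N[S]\subseteq N[S_E]$), whereas the paper's proof glosses over it; your treatment here is more careful than the original.
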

%\begin{comment}
\begin{proof}
Assume we are given a feasible solution $S$ to BCDS of size $|S| = s \le k$.
Since $S$ is connected, there exists a set $S_E \subseteq E$, where $|S_E| = s - 1 \le k-1$, such that $(S, S_E)$ is a tree.
Then, it holds $N[S_E] = \bigcup_{v \in S} N[v] = N[S]$, since all $v \in S$ are incident to an edge in $S_E$.

On the contrary, assume we are given a feasible solution $S_E$ to $\BEVDC(G, k-1)$ of size $|S_E| = s_E \le k-1$.
Since $S_E$ is a connected set of edges, in the best case, it is incident to a set $S$ of at most $s_E + 1 \le k$ vertices, when $(S, S_E)$ forms a tree.
In terms of neighborhood, $N[S] = \bigcup_{v \in S} N[v] = N[S_E]$, since all $v \in S$ are incident to an edge in $S_E$.
\qed\end{proof}
%\end{comment}

\subsection{Partial Edge-Vertex Domination}\label{sec:pevd}

Herein, we prove an $O(\log n)$-approximation for Partial Edge-Vertex Domination (PEVD); refer to Definition~\ref{def:partial}.
Given a graph $G=(V,E)$ and an integer $n'$, we need to select a subset $E' \subseteq E$ of minimum size such that it holds $|N[E']| \ge n'$.
To approximate the problem, we identify a reduction to \emph{Partial Cover} (PC).

\begin{definition}[PARTIAL COVER]\label{def:partial-cover}
	Given a universe (set) of elements $X = \{x_1, x_2, ..., x_n\}$, a collection of subsets of $X$, $S = \{S_1, S_2, ..., S_m\}$, and a real $0 < p \le 1$, find a minimum-size sub-collection of $S$, say $S'$, that covers at least a $p$-part of $X$, i.e., $|\bigcup_{S_i \in S'} S_i| \ge pn$.
\end{definition}

\begin{theorem}[Theorems~3, 4 in \cite{Slavik}]\label{thm:partial-cover}
	PARTIAL COVER is approximable within a factor $\min\{H(\lceil pn \rceil), H(D)\}$, where $H$ is the Harmonic number $H(x) = \sum_{i = 1}^{x} 1/x$ and $D$ is the maximum size of a set in $S$.
\end{theorem}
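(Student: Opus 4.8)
The plan is to analyze the natural greedy algorithm for PARTIAL COVER: starting from an empty sub-collection, repeatedly add the set covering the largest number of still-uncovered elements, and stop as soon as the quota $t \eqdef \lceil pn \rceil$ is met. I will show this greedy uses at most $\min\{H(t), H(D)\}\cdot \OPT$ sets. Throughout, write $k = \OPT$, let $G_1, G_2, \ldots, G_g$ be the greedy picks, and for each step $i$ let $\delta_i$ be the number of elements it newly covers and $n_i$ the residual need (quota minus elements covered so far) just before it; thus $n_1 = t$ and $g$ is the step at which $n_g$ reaches $0$. Both bounds come from a charging argument that assigns, to each element newly covered in step $i$, a cost of $1/\delta_i$, so that the total assigned cost equals the greedy count $g$.

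For the $H(t)$ bound, the key observation is that while the residual need $n_i$ is positive, the $k$ optimal sets jointly cover at least $n_i$ of the currently uncovered elements: the optimal solution covers at least $t$ elements, of which at most $t - n_i$ are already covered by greedy, leaving at least $n_i$ uncovered ones spread over $k$ sets. By averaging, some optimal set covers at least $n_i/k$ of them, so greedy's maximum choice satisfies $\delta_i \ge \lceil n_i/k\rceil$. Ordering the first $t$ covered elements $e_1, \ldots, e_t$ by coverage time, the step covering $e_j$ had residual need at least $t - j + 1$, hence $\delta \ge (t-j+1)/k$ and $\mathrm{charge}(e_j) \le k/(t-j+1)$. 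Summing gives $\sum_{j=1}^{t}\mathrm{charge}(e_j) \le k\sum_{l=1}^{t} 1/l = k\,H(t)$, accounting for everything except the overshoot of the final step.

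For the $H(D)$ bound, I adapt the classical per-set charging of greedy set cover. Fix an optimal set $O_j$ and order the elements of $O_j$ that greedy eventually covers by coverage time. Whenever greedy covers such an element, $O_j$ is itself an available set whose coverage equals its number of not-yet-covered elements; since greedy picks a set of maximum coverage, the step's $\delta$ is at least that number. Hence the $1/\delta$ charges on the elements of $O_j$ telescope to at most $H(|O_j|) \le H(D)$, and summing over the $k$ optimal sets yields the $k\,H(D)$ estimate. Taking the better of the two estimates gives the claimed $\min\{H(t), H(D)\}$ guarantee.

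The main obstacle is that the partial setting breaks the two clean features of full set cover that make these charging arguments exact. First, the final greedy step may overshoot the quota, which is what leaves an additive slack beyond $k\,H(t)$ in the first estimate; it must be removed by discounting the last set to exactly the residual need $n_g$. Second, greedy need only \emph{match}, not reproduce, the optimal coverage, so it may cover elements lying outside every optimal set, and such elements are not captured by the per-set charges in the $H(D)$ estimate. Arguing that these uncharged elements can be absorbed without inflating the harmonic bound, and tightening the last-step accounting to pin down the exact constants, is precisely the technical content of Theorems~3 and~4 in \cite{Slavik}, and is where I would concentrate the detailed work.
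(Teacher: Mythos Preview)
The paper does not prove this theorem at all: it is stated with attribution ``Theorems~3,~4 in \cite{Slavik}'' and then invoked as a black box inside the proof of Theorem~\ref{thm:partial}. There is therefore no ``paper's own proof'' to compare against; the paper simply imports the result.

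Your sketch is a reasonable outline of Slav\'ik's greedy analysis, but it is not a proof --- and you say so yourself. In the $H(t)$ branch you leave the overshoot of the last greedy set unhandled; in the $H(D)$ branch you correctly identify that greedy may cover elements lying outside every optimal set, so the per-optimal-set charging does not account for all of greedy's picks, and you do not close this gap. Your final paragraph then defers both issues back to ``Theorems~3 and~4 in \cite{Slavik}'', which is circular: that is precisely the statement you are asked to prove. If you want a self-contained argument, you must actually carry out the last-step truncation (replace the final $\delta_g$ by the residual need $n_g$ and recount) and, for the $H(D)$ bound, show how to swap uncharged elements covered by greedy for uncovered elements inside the optimal sets without increasing the harmonic sum. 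As written, what you have is a correct roadmap, not a proof.
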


\begin{theorem}\label{thm:partial}
	There exists a $\min\{H(n'), H(2\Delta)\}$-approximation for PEVD.
\end{theorem}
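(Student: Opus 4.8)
The plan is to reduce PEVD to PARTIAL COVER and apply Theorem~\ref{thm:partial-cover}. Given an instance of PEVD with graph $G = (V,E)$ and quota $n'$, I construct an instance of PARTIAL COVER as follows: the universe is $X = V$ with $n = |V|$ elements, and for each edge $e_i \in E$ I introduce the set $S_i = N[e_i]$. I set the fraction to $p = n'/n$, so that covering a $p$-part of $X$ means covering at least $n'$ vertices. A subcollection $S' = \{S_{i_1}, \ldots, S_{i_l}\}$ covering at least $n'$ elements corresponds bijectively to an edge set $E' = \{e_{i_1}, \ldots, e_{i_l}\}$ with $|N[E']| = |\bigcup_j N[e_{i_j}]| = |\bigcup_j S_{i_j}| \ge n'$, and conversely; moreover $|E'| = |S'|$, so optimal solutions correspond and their sizes are equal. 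Hence an $\alpha$-approximation for PARTIAL COVER yields an $\alpha$-approximation for PEVD.

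Next I would invoke Theorem~\ref{thm:partial-cover}, which gives approximation ratio $\min\{H(\lceil pn \rceil), H(D)\}$, where $D$ is the maximum set size in the constructed instance. Here $\lceil pn \rceil = \lceil n' \rceil = n'$ since $n'$ is an integer, giving the first term $H(n')$. For the second term, I need to bound $D = \max_i |N[e_i]|$. For an edge $e = (u,v)$, the dominated set is $N[e] = N[u] \cup N[v]$, which has size at most $|N[u]| + |N[v]| = (d(u)+1) + (d(v)+1) \le 2\Delta + 2$ if one counts closed neighborhoods naively; but since $u \in N[v]$ and $v \in N[u]$, the two closed neighborhoods overlap in at least the two vertices $u, v$, so $|N[u] \cup N[v]| \le (d(u)+1)+(d(v)+1) - 2 = d(u) + d(v) \le 2\Delta$. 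This gives $D \le 2\Delta$ and hence the second term $H(D) \le H(2\Delta)$ (using that $H$ is monotone). Combining, the approximation factor is $\min\{H(n'), H(2\Delta)\}$, as claimed.

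The main thing to get right is the bound $D \le 2\Delta$: one must use the overlap of the two closed neighborhoods at the edge's endpoints rather than a crude union bound, which would only give $2\Delta + 2$. Beyond that, the argument is routine — the reduction is a direct translation ($V \mapsto$ universe, edge-neighborhoods $\mapsto$ sets), preserves feasibility and solution size exactly in both directions, and runs in polynomial time since each $N[e_i]$ is computable in $O(\Delta)$ time and $|E| \le \binom{|V|}{2}$. A minor point worth stating explicitly is that $p = n'/n \le 1$ (so the PARTIAL COVER instance is well-formed), which holds because $n' \le |V| = n$ for any meaningful instance, and that if $n'$ exceeds the total number of dominable vertices the instance is infeasible and we may discard it. No further obstacles are anticipated.
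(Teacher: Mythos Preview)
Your proposal is correct and follows essentially the same approach as the paper: reduce PEVD to PARTIAL COVER by taking $X = V$, $S_i = N[e_i]$, and $p = n'/n$, then invoke Theorem~\ref{thm:partial-cover} with the bound $|N[e_i]| \le 2(\Delta+1) - 2 = 2\Delta$ obtained by subtracting the two doubly-counted endpoints. Your write-up is in fact slightly more explicit than the paper's (you spell out $\lceil pn \rceil = n'$, the monotonicity of $H$, and the polynomial-time check), but the argument is the same.
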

%\begin{comment}
\begin{proof}
	Given an instance $G = (V,E)$ and $n'$ of PEVD, where $|V| = n$ and $|E| = m$, we construct an instance $(S, X, p)$ of PC.
	Let $X = V$, $S_i = N[e_i]$ for $e_1, e_2, \ldots, e_m \in E$, and $p = n'/n$.
	A feasible solution for PEVD is a subset of $\rho$ edges $R = \{e_{i_1}, e_{i_2}, \ldots, e_{i_\rho}\}$.
	Equivalently, for PC, we select the corresponding collection $S_R = \{S_{i_1}, S_{i_2}, \ldots, S_{i_\rho}\}$, where $S_{i_j} = N[e_{i_j}]$.
	Notice that, for any $i_j$, it holds $|N[e_{i_j}]| \le 2(\Delta + 1) - 2 = 2\Delta$, where $\Delta$ is the maximum degree in $G$, since each endpoint of $e_{i_j}$ dominates at most $\Delta+1$ vertices; minus two overall in order not to double count the endpoints.
	Also, $N[R] = \bigcup_{e_{i_j} \in R} N[e_{i_j}] = \bigcup_{S_{i_j} \in S_R} S_{i_j}$.
	Hence, finding a solution for PEVD, which dominates at least $n'$ vertices, is equivalent to finding a cover for PC, which covers at least a $n'/n$ part of the universe $X$.
	Applying Theorem~\ref{thm:partial-cover} completes the proof.
\qed\end{proof}
%\end{comment}

%%% Connected case matches PCDS

\begin{comment}

\begin{figure}[h]
\scalebox{0.9}{
\begin{algorithm}[H]
\SetKwInOut{Input}{Input}
\SetKwInOut{Output}{Output}
\DontPrintSemicolon

\Input{A graph $G$, an integer $n'$}
\Output{A set $S \subseteq V(G)$}

$(D, p) \leftarrow GDS(G)$\;
\ForEach{$v \in V(G)$}{
\ForEach{$x \in \{1, 2, \ldots, ???\}$}{ % maybe |\OPT|\ln\Delta/2 ?
$D_{v,x} \leftarrow D \cap B_{v,x}$\;
$T_{v,x} \leftarrow \MWST(Best_k(D_{v,x}, p))$ \;
}
}
Let $\mathcal{T}^* =  \{T_{v,x} : p(T_{v,x}) \ge n'\}$\;
\textbf{return} $\argmin_{T_{v,x} \in \mathcal{T}^*} |T_{v,x}|$\;

\caption{Partial Diameter}
\label{alg:pcds}
\end{algorithm}
}
\end{figure}

\begin{lemma}[Lemma~4.1 \cite{Khuller}]
There is a subset $D' \subseteq L'_1 \cup L'_2, \cup L'_3$ such that $|D'| \ge |\OPT| \ln\Delta + 1$ and the total profit of vertices in $D'$ is at least $n'$, i.e., $\sum_{v\in D'} p(v) \ge n'$.
\end{lemma}

\begin{lemma}[Extension of Theorem~4.1 \cite{Khuller}]
Let $\OPT$ be the optimal solution set for an instance of $\PCDS$. 
There exists a tree $\hat{T}$ with at most $2|\OPT|\;\ln\Delta + |\OPT| + 1$ edges such that $\sum_{v \in \hat{T}} p(v) \ge n'$.
Moreover, the diameter of $\hat{T}$ is at most $k+3$.
\end{lemma}	
\begin{proof}

\qed\end{proof}

\begin{theorem}\label{thm:pcds}
There exists a $2.5\ln\Delta + 1.25 + o(1)$ approximation algorithm for PCDS.
\end{theorem}
\begin{proof}

\qed\end{proof}

\end{comment}

\section{Conclusion}\label{sec:conclusion}

% short recap
We propose a new technique to obtain tree decompositions, and a generalized analysis, thus improving the approximation guarantee from $(1-e^{-1})/13$ to $(1-e^{-7/8})/11$ for BCDS.
Furthermore, we prove a $(1-1/e+\epsilon)$ upper bound.
Also, we introduce BEVD and PEVD, and provide (tight) approximation bounds.
%Finally, we demonstrate that the connected version of BEVD matches the BCDS problem.

%%%%%%%%
Regarding future work on BCDS, the goal is to design an algorithm with an improved guarantee.
Moreover, it would be interesting to capture the difficulty of the problem with a stronger inapproximability result.
We believe that a tight bound lies somewhere between our currently established state of the art.
%%%%%%%%

% future directions

%In the connected cases, where edge-vertex domination is essentially equivalent to standard vertex-vertex domination, 

\begin{comment} % OLD 1/7
We believe our approach, presented in Algorithm~\ref{alg:diam} for the budgeted connected domination case, can be employed to other related problems.
For instance, consider the problems appearing in \cite{Khuller}, but not yet considered here: Partial Connected Dominating Set, Budgeted Generalized Connected Domination, and Partial Generalized Connected Domination.
The last two generalize the standard problems by introducing a special submodular profit function.
To obtain the current state of the art \cite{Khuller}, the Quota Steiner Tree based approach is used as a critical step in all the algorithms for the above problems.
Instead, exploiting the diameter and the special version of MWST, as we do here in Section~\ref{sec:bcds}, should provide improved approximation guarantees for all these problems too.
\end{comment}

Related to the edge-vertex case, it would be interesting to consider budgeted and partial versions for other dominating set variants, such as mixed domination \cite{Zhao}, where both vertices and edges are selected in order to dominate as many vertices and edges as possible, expansion ratio variants such as in \cite{Lamprou}, or even eternal domination \cite{Lamprou-eternal}, where a set of guards need to dominate the graph perpetually while moving to protect it against attacks on its vertices.

\end{document}